\documentclass[lettersize,journal]{IEEEtran}
\usepackage{amsmath,amsfonts}
\usepackage{algorithmic}
\usepackage{algorithm}
\usepackage{array}
\usepackage[caption=false,font=normalsize,labelfont=sf,textfont=sf]{subfig}
\usepackage{textcomp}
\usepackage{stfloats}
\usepackage{url}
\usepackage{verbatim}
\usepackage{graphicx}
\usepackage{cite}
\usepackage{booktabs, multirow}
\usepackage{amssymb}
\usepackage{amsmath, amsthm} 
\newtheorem{theorem}{Theorem}
\newtheorem{lemma}{Lemma}

\usepackage{enumitem}
\usepackage{subcaption}
\usepackage{color}

\begin{document}

\title{FedUTR: Federated Recommendation with Augmented Universal Textual Representation for Sparse Interaction Scenarios}

\author{Kang Fu,
Honglei Zhang,
Zikai Zhang,
Jundong Chen,
Xin Zhou,
Zhiqi Shen,~\IEEEmembership{Senior Member,~IEEE} \\
Dusit Niyato,~\IEEEmembership{Fellow,~IEEE}
and Yidong Li,~\IEEEmembership{Senior Member,~IEEE}

% <-this % stops a space
% \thanks{This paper was produced by the IEEE Publication Technology Group. They are in Piscataway, NJ.}% <-this % stops a space
% \thanks{Manuscript received April 19, 2021; revised August 16, 2021.}

\thanks{Kang Fu, Honglei Zhang, Jundong Chen, and Yidong Li are with Key Laboratory of Big Data \& Artificial Intelligence in Transportation, Ministry of Education, and School of Computer Science and Technology, Beijing Jiaotong University, China (e-mail: \{kangfu, honglei.zhang, jundongchen, ydli\}@bjtu.edu.cn)}
\thanks{Zikai Zhang is with Carnegie Mellon University, Qatar (e-mail: zikaiz2@\allowbreak andrew.cmu.edu)}
\thanks{Xin Zhou, Zhiqi Shen, and Dusit Niyato are with College of Computing and Data Science, Nanyang Technological University, Singapore (e-mail: \{xin.zhou, zqshen, dniyato\}@ntu.edu.sg).}
% \thanks{Corresponding author: Yidong Li.}
}

% The paper headers
% \markboth{IEEE TRANSACTIONS ON MULTIMEDIA,~Vol.~14, No.~8, August~2025}%
% {Shell \MakeLowercase{\textit{et al.}}: A Sample Article Using IEEEtran.cls for IEEE Journals}

% \IEEEpubid{0000--0000/00\$00.00~\copyright~2021 IEEE}
% Remember, if you use this you must call \IEEEpubidadjcol in the second
% column for its text to clear the IEEEpubid mark.

\maketitle

\begin{abstract}
Federated recommendations (FRs) have emerged as an on-device privacy-preserving paradigm, attracting considerable attention driven by rising demands for data security. Existing FRs predominantly adapt ID embeddings to represent items, making the quality of item embeddings entirely dependent on users' historical behaviors. However, we empirically observe that this pattern leads to suboptimal recommendation performance under high data sparsity scenarios, due to its strong reliance on historical interactions. To address this issue, we propose a novel method named \textbf{FedUTR}, which incorporates item textual representations as a complement to interaction behaviors, aiming to enhance model performance under high data sparsity. Specifically, we utilize textual modality as the universal representation to capture generic item knowledge, and design a Collaborative Information Fusion Module (CIFM) to complement each user's personalized interaction information. Besides, we introduce a Local Adaptation Module (LAM) that adaptively exploits the off-the-shelf local model to efficiently preserve client-specific personalized preferences. Moreover, we propose a variant of FedUTR, termed \textbf{FedUTR-SAR}, which incorporates a sparsity-aware resnet component to granularly balance universal and personalized information. The convergence analysis proves theoretical guarantees for the effectiveness of FedUTR. Extensive experiments on four real-world datasets show that our method achieves superior performance, with improvements of up to 59\% across all datasets compared to the SOTA baselines.
\end{abstract}

\begin{IEEEkeywords}
Federated learning, Recommendation system, Augmented universal textual representation.
\end{IEEEkeywords}

\section{Introduction}
\IEEEPARstart{R}{ecommendation} systems (RSs) aim to identify items that are likely to be of interest to users \cite{TMM2}. However, traditional RSs typically collect and centralize large volumes of user data on the server, which poses significant privacy risks, particularly under strict data protection regulations such as the General Data Protection Regulation (GDPR) \cite{GDPR}. 
To mitigate these privacy concerns, federated learning (FL) has been introduced into recommendation scenarios as a distributed learning paradigm that enables collaborative model training without sharing raw user data. FedAvg is the first FL framework \cite{FedAVG}, which learns a global model by iteratively aggregating locally trained updates via weighted averaging, and has become the foundational optimization backbone for a wide range of FL applications \cite{FCF, FL_activity_recognition, FL_TMM}.

Based on this highly emerging paradigm, \textbf{F}ederated \textbf{R}ecommendation\textbf{s} (FRs) adapt the FL optimization to recommendation tasks, enabling collaborative model learning across distributed users while preserving data locality \cite{FCF, TMM}. As illustrated in Fig. \ref{fig1}(a), the architecture based on user-item ID embeddings in FCF \cite{FCF} constitutes the prevailing framework for FRs in the client side \cite{lightFR, FedNCF, GPFedRec, FedRAP, FedCA}. 
These methods typically depict items based on their ID embeddings. However, the ID-based embedding methods overlook the intrinsic attributes of items, resulting in the quality of item embeddings being highly dependent on user-item interaction data. Consequently, the ID-based item embeddings may lead to suboptimal results when the data is highly sparse.
% Based on the FL paradigm, \textbf{F}ederated \textbf{R}ecommendation\textbf{s} (FRs) adapt federated optimization to recommendation tasks, enabling collaborative model learning across distributed users while preserving data locality \cite{FCF, TMM}. In the client-side, most methods perform recommendation through the score (e.g. similarity or inner product) between user embeddings and item embeddings. 
% Existing FR methods typically depict items based on their ID embeddings \cite{lightFR, GPFedRec}. As illustrated in Fig. \ref{fig1}(a), the architecture based on user-item ID embeddings in FCF constitutes the prevailing framework for FRs \cite{FCF}. However, the ID-based embedding methods overlook the intrinsic attributes of items, resulting in the quality of item embeddings being highly dependent on user-item interaction data. Consequently, the ID-based item embeddings may lead to suboptimal results when the data is highly sparse.

\begin{figure}
\centering
\includegraphics[width=\columnwidth]{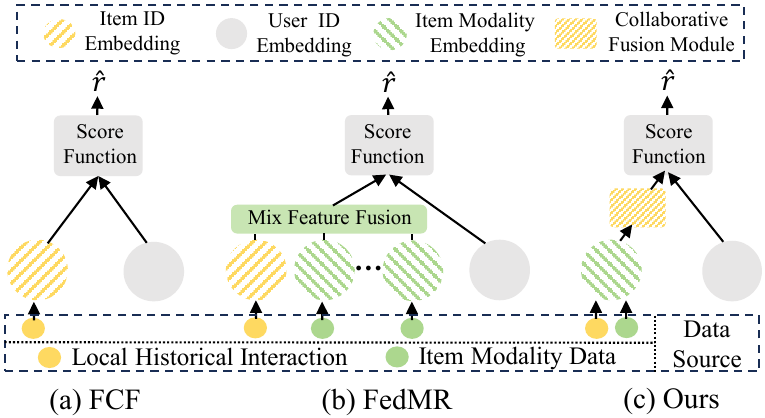}
\caption{Comparison of different client-side mechanisms in FRs. Our model augments universal textual modality on top of behavior interaction data in a parameter-efficient manner.}
\label{fig1}
\end{figure}

% 不应该叫协同信息融合模块，应该叫行为数据融合模块（IDFM），在通用文本数据的基础上引入行为信息。

To overcome this limitation, a natural solution is to enrich item representations with their associated modalities. Recently, advancements in Foundation Models (FM) have enabled more accurate extraction of \IEEEpubidadjcol modality data, which has already led to significant performance improvements in centralized recommendation scenarios \cite{BERT, CLIP, Llama, VBPR}. However, leveraging modality information in FRs poses significant challenges due to clients' constrained computational resources and storage capacity, which fundamentally restricts the complexity of on-device models. As shown in Fig. \ref{fig1}(b), FedMR pioneers the integration of multiple modalities (e.g., text, images) for items by performing mixed feature fusion in FRs \cite{FedMR}. However, such fusion strategy across multiple modality embeddings introduces substantial computational overhead, which significantly increases training latency. Besides, there inevitably exists substantial redundancy between textual and image modalities. Hence, recent work \cite{MultiVSingle} reveals that the increased capacity of multi-modal networks makes them more prone to overfitting, which ultimately causes them to yield inferior performance. This phenomenon becomes more significant in FRs, where the inherently high data sparsity further exacerbates this challenge.

\begin{figure*}[t]
\centering
\subfloat[\footnotesize Group1($\tau $=89.28\%)]{
  \includegraphics[width=0.185\textwidth]{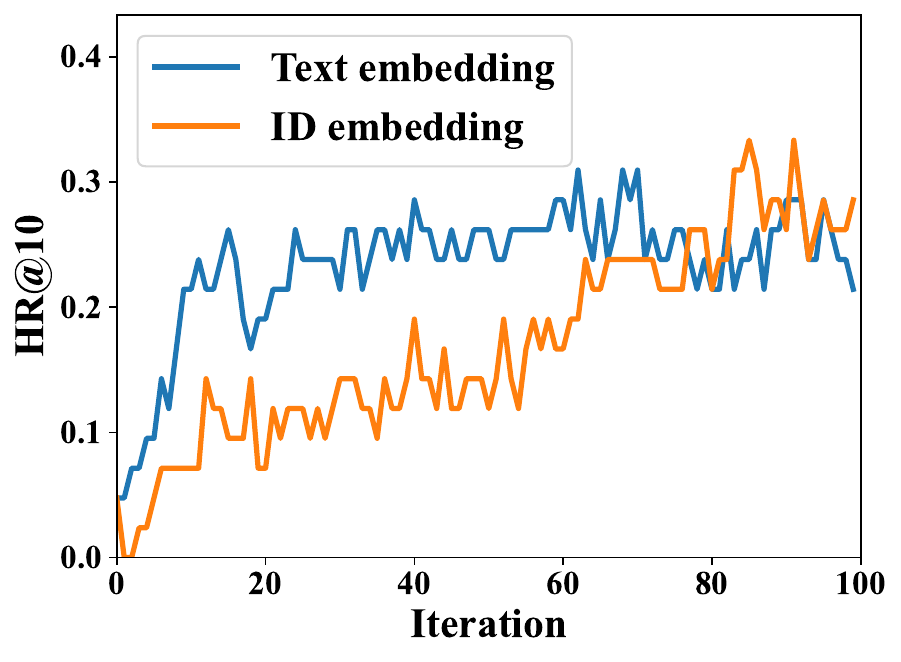}
  \label{fig2a}
}
\hfill
\subfloat[\footnotesize Group2($\tau $=91.98\%)]{
  \includegraphics[width=0.185\textwidth]{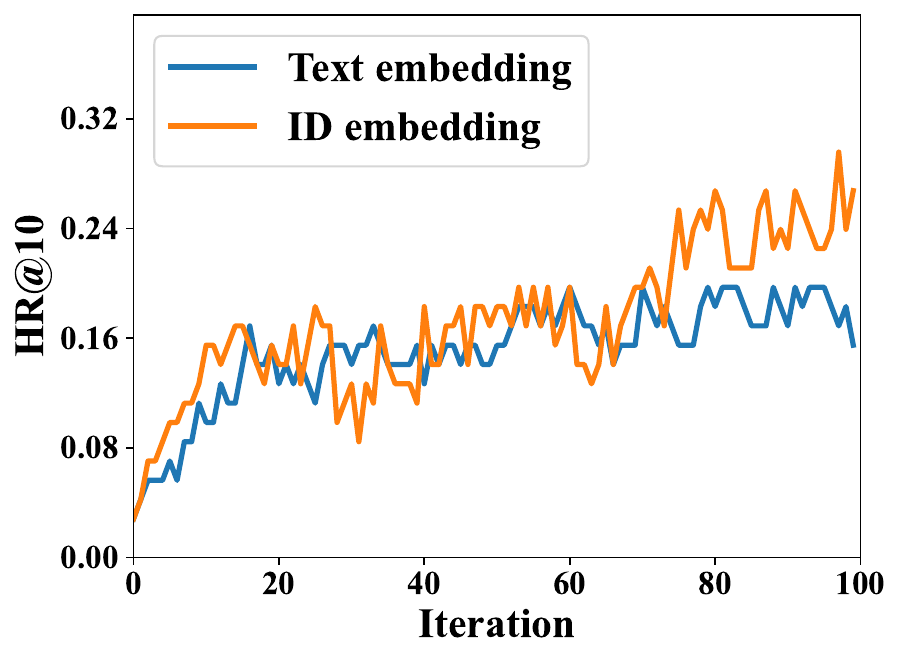}
  \label{fig2b}
}
\hfill
\subfloat[\footnotesize Group3($\tau $=94.52\%)]{
  \includegraphics[width=0.185\textwidth]{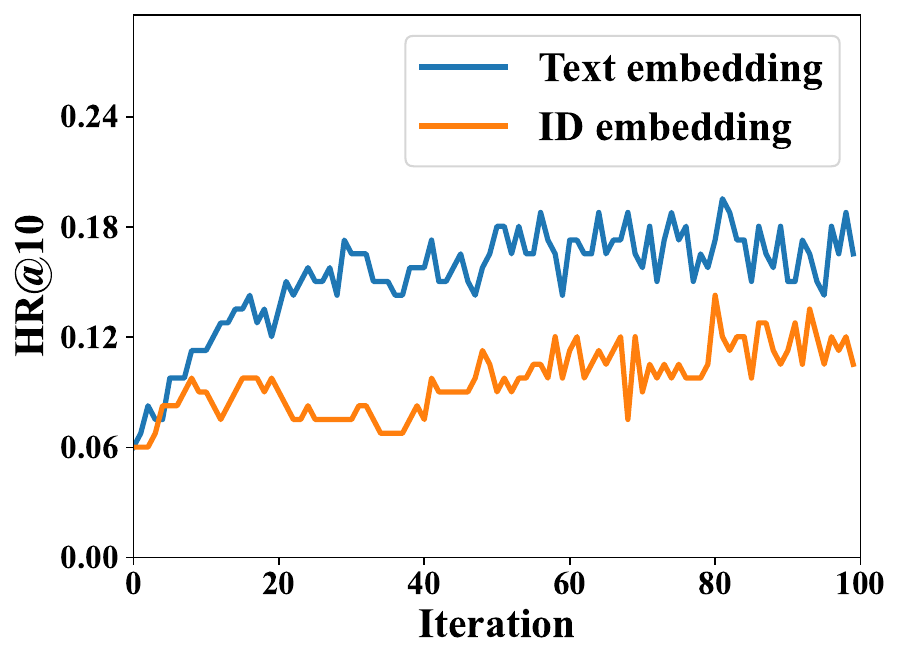}
  \label{fig2c}
}
\hfill
\subfloat[\footnotesize Group4($\tau $=96.08\%)]{
  \includegraphics[width=0.185\textwidth]{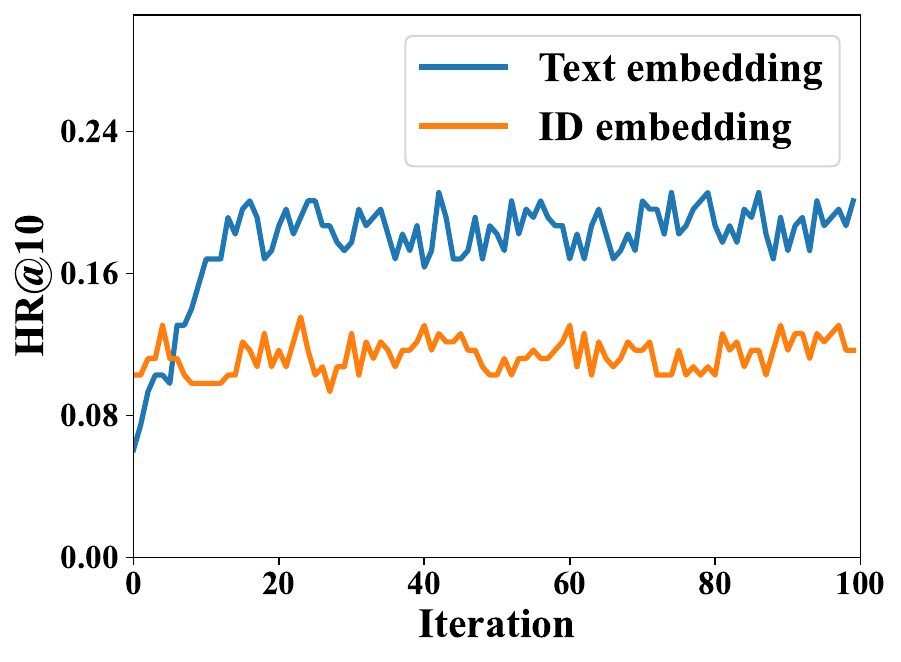}
  \label{fig2d}
}
\hfill
\subfloat[\footnotesize Group5($\tau $=99.50\%)]{
  \includegraphics[width=0.185\textwidth]{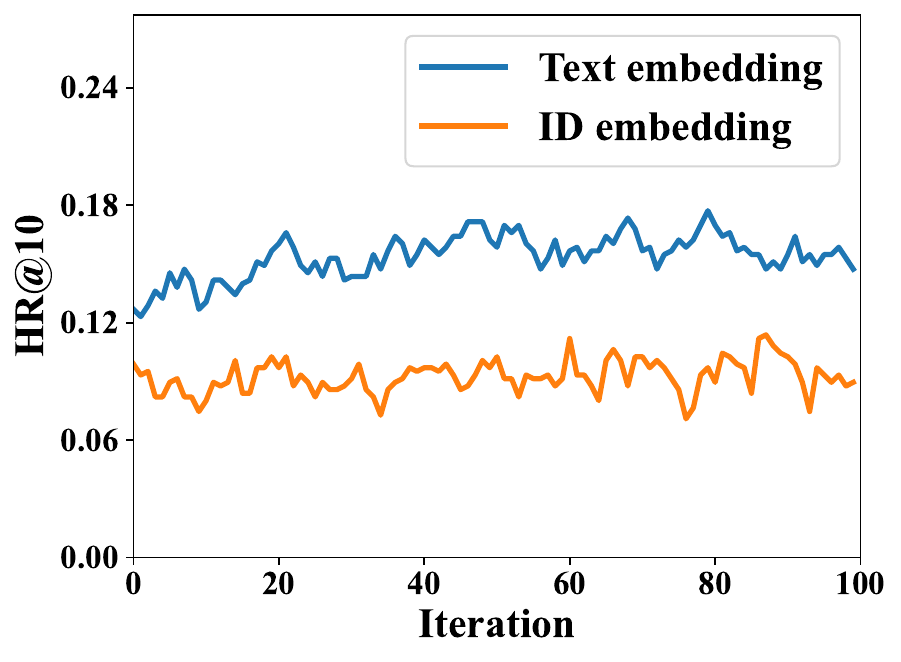}
  \label{fig2e}
}
\caption{The recommendation performance of text embeddings and ID embeddings across data subsets with varying sparsity levels. $\tau$ denotes the data sparsity level of the corresponding group. The x-axis represents training iterations, and the y-axis indicates HR@10.}
\label{fig2}
\end{figure*}

Based on this, we pose a fundamental question: Can we leverage one universal modality to enhance item representations while keeping the complexity of the model within acceptable limits for resource-limited clients in FRs? Motivated by this question, we design a case study for further validation. We use FCF as the backbone, partition users based on data sparsity in the Dance dataset \cite{NineRec}, and extract item text embeddings using foundation models (e.g., BERT) as a pioneering exploration. The model’s performance is then evaluated by using both ID embeddings and text embeddings as representations of the items, respectively. As shown in Fig. \ref{fig2}, the sparsity of the data gradually increases from Group 1 to 5. \textbf{We have the following observations and conclusions:}
\textbf{(1)} In relatively dense scenarios, text embeddings achieve performance comparable to ID embeddings, indicating that text embeddings are already effective even when ample interaction data is available.
\textbf{(2)} In highly sparse settings, ID embeddings suffer from severe performance degradation. In contrast, text embeddings can greatly enhance model performance, potentially serving as an effective complement to ID embeddings, demonstrating the robustness and universality of text modalities, particularly in sparse data regimes.

Based on the above observations, we propose a novel method suitable for sparse data scenarios in FRs, named \textbf{Fed}erated Recommendation with augmented \textbf{U}niversal \textbf{T}extual \textbf{R}epresentation \textbf{(FedUTR)}. Our approach integrates the complementary advantages of universal textual embeddings and personalized ID embeddings. To be specific, we design the Universal Representations Module (URM) that employs text embeddings as universal representations to depict intrinsic characteristics of items, and the Collaborative Information Fusion Module (CIFM) to capture personalized interaction information and universal textual knowledge across clients. As illustrated in Fig. \ref{fig1}(c), our method introduces only an additional CIFM compared to the conventional architecture in FCF, and its parameter size is negligible. Taking the KU dataset as an example, the parameter count increases only by 1.58\% compared to the FCF. In contrast to FedMR's explicit fusion strategy between ID and multiple modality-based embeddings, our framework achieves superior model performance with significantly fewer parameters. Besides, to efficiently preserve client-specific personalized preferences, we propose a Local Adaptation Module (LAM) to dynamically integrate global and off-the-shelf local models. Furthermore, we introduce a variant of FedUTR, named \textbf{FedUTR-SAR}, which designs a sparsity-aware module to adaptively balance the contributions of universal representations and behavior information according to each client's local sparsity. This variant is more suitable for scenarios where client-side computational resources are relatively sufficient and high performance is required. 

Our main contributions are as follows.

\begin{itemize}
    \item We empirically discover that the ID-based embedding approach fails to accurately capture item features in federated settings under high data sparsity. Therefore, we propose a novel framework, FedUTR, based on foundation models for sparse data scenarios.
    \item We introduce a URM to capture generic item knowledge, which can serve as a flexible plugin to be integrated with existing FRs. In addition, we design a CIFM to extract local interaction information and an LAM to preserve client-specific personalized preferences.
    \item We propose a more advanced sparsity-aware variant, FedUTR-SAR, which is tailored for clients with sufficient computational resources and stringent performance requirements.
    \item We provide a theoretical convergence analysis for the proposed scheme. Extensive experiments on four datasets demonstrate that our method consistently outperforms all baseline approaches.
\end{itemize}

% The rest of this paper is organized as follows. Section \ref{section 2} reviews the related work in the research domain. Section \ref{section 3} provides a detailed description of the proposed method. Section \ref{section 4} presents a theoretical analysis of the convergence of FedUTR. Section \ref{section 5} performs extensive experiments that address the five research questions mentioned above, demonstrating the effectiveness and superiority of our model. Finally, Section \ref{section 6} concludes the paper.

\section{Related work}
\label{section 2}
\subsection{Federated Recommendations}
FR is a privacy-preserving recommendation paradigm based on federated learning \cite{FL}, which performs recommendation accurately while ensuring user privacy and data security. FCF \cite{FCF} is the first federated recommendation algorithm built upon FedAvg \cite{FedAVG}, in which locally trained model updates are uploaded to the server and aggregated to form a global model. FedNCF \cite{FedNCF} further enhances recommendation performance through incorporating neural collaborative filtering (NCF) \cite{NCF} to capture higher-order nonlinear interactions. PFedRec \cite{PFedRec} removes the user embedding of each client and mimics the user’s decision logic through the score function. FedRAP \cite{FedRAP} preserves personalization and enhances communication efficiency by applying an additive model to item embedding. To achieve more effective personalized model aggregation, Fedfast \cite{Fedfast} improves training efficiency by employing active sampling and active aggregation mechanisms. GPFedRec \cite{GPFedRec} proposes a graph-based similarity-aware parameter aggregation approach that adaptively adjusts fusion weights based on topological correlations among clients in the global graph. However, existing mainstream methods primarily rely on ID embeddings to characterize items. FedMR \cite{FedMR} builds the federated multimodal recommendation framework by integrating modality features and ID embeddings, but this integration introduces substantial computational overhead, which poses significant challenges for industrial deployment. In this work, we propose FedUTR, a more effective method to enhance recommendation performance by leveraging modality information without significantly increasing model complexity.

\subsection{Foundation Models}
FM refers to models trained on broad data encompassing language, vision, and other domain corpora, which can be adapted to various downstream tasks (e.g., fine-tuning) \cite{FoundationSurvey}. The language model BERT \cite{BERT} pioneers bidirectional context encoding to reconstruct masked tokens via masked language modeling, while RoBERTa \cite{RoBERTa} enhances training efficiency through dynamic masking and larger batch sizes. Building on these, GPT-3(175B) demonstrates extreme success in language modeling by leveraging extensive text corpora training to align LLM capabilities with human intent \cite{GPT3, FoundationSurvey}, whereas LLaMA-65B \cite{Llama} achieves GPT-3 level performance on reasoning tasks with fewer parameters. In computer vision, ViT \cite{ViT} redefines image processing by applying pure transformer architectures to image patches, achieving excellent performance in image classification tasks. CLIP \cite{CLIP} bridges image-text understanding through contrastive objectives on image-text pairs, enabling zero-shot cross-modal transfer in various computer vision tasks. FM has shown remarkable feature extraction capabilities across various domains. In our FedUTR framework, we leverage FM to extract modality features of items, serving as initial universal representations.

\section{Methodology}
\label{section 3}
% In this section, we formalize the research problem of FRs and comprehensively present the proposed Federated Recommendation with Modality-Enhanced Representation (FedUTR) and its variant FedUTR-SAR. 

\subsection{Problem Formulation}
Let $\mathcal{U} = \{u_1, u_2, \ldots, u_n\}$ and $\mathcal{I} = \{i_1, i_2, \ldots, i_m\}$ denote the sets of users and items, respectively. Each user $u \in \mathcal{U}$ maintains a local dataset $\mathcal{D}_u = \{(u, i, r_{ui}) \mid i \in \mathcal{I}_u,\ r_{ui} \in \{0,1\}\}$, where $r_{ui}$ indicates the presence ($r_{ui}=1$) or absence ($r_{ui}=0$) of an interaction between user $u$ and item $i$. In the model training phase, each client $ u \in \mathcal{U} $ first trains its local model on the private interaction history $ \mathcal{D}_u $ by minimizing the binary cross-entropy loss:
\begin{equation}
\mathcal{L}_{rec}(\theta_u;\mathcal{D}_u ) =-\sum_{i \in \mathcal{I}_u}\log \hat{r}_{ui}-\sum_{i \in \mathcal{I}^{-}_u}\log (1 - \hat{r}_{ui}),
\label{eq1}
\end{equation}
where $\mathcal{I}_u$ and $\mathcal{I}^-_u$ denote the interacted positive item set and sampled negative item set of user $u$, respectively. $\hat{r}_{ui} = f(\theta_u;u,i)$ is the predicted interaction probability of the on-device model $f(\theta_u)$, and $r_{ui}$ is the true interaction from user $u$'s local dataset $\mathcal{D}_u$.

During the global model aggregation phase, the optimization objective over $n$ participating clients is formulated as:
\begin{equation}
    \min_{\{\theta : \theta_1,\theta_2,...,\theta_n\}}  \frac{1}{n} \sum_{u=1}^n \mathcal{L}_{rec}(\theta_u; \mathcal{D}_u),
    \label{eq2}
\end{equation}
where $\theta$ denotes the global model parameters, and $\mathcal{L}_{rec}(\theta_u; \mathcal{D}_u)$ represents the local loss computed on user $u$.

\subsection{Framework Overview}
 As illustrated in Fig. \ref{fig:framework}, FedUTR first leverages FM to extract item textual modality features as universal representations on the server, and then distributes these representations to clients as initial universal embeddings for items. Subsequently, each training round of FedUTR consists of the following steps: The CIFM enriches universal embeddings with local interaction knowledge and global collaborative information. These enriched item embeddings are then fed into the score function to obtain the final prediction scores. Then, the updated universal embeddings and CIFM parameters are uploaded to the central server for model aggregation. The aggregated parameters are further distributed to participating clients for the next training round. During this distribution phase, we adopt an LAM to dynamically integrate global collaborative information and local interaction knowledge, thereby effectively preserving user-specific personalized preferences. 
 
 In the following part, we provide the details of each component in the proposed FedUTR framework, following the algorithmic workflow.

\begin{figure*}[t]
\centering
\includegraphics[width=\textwidth]{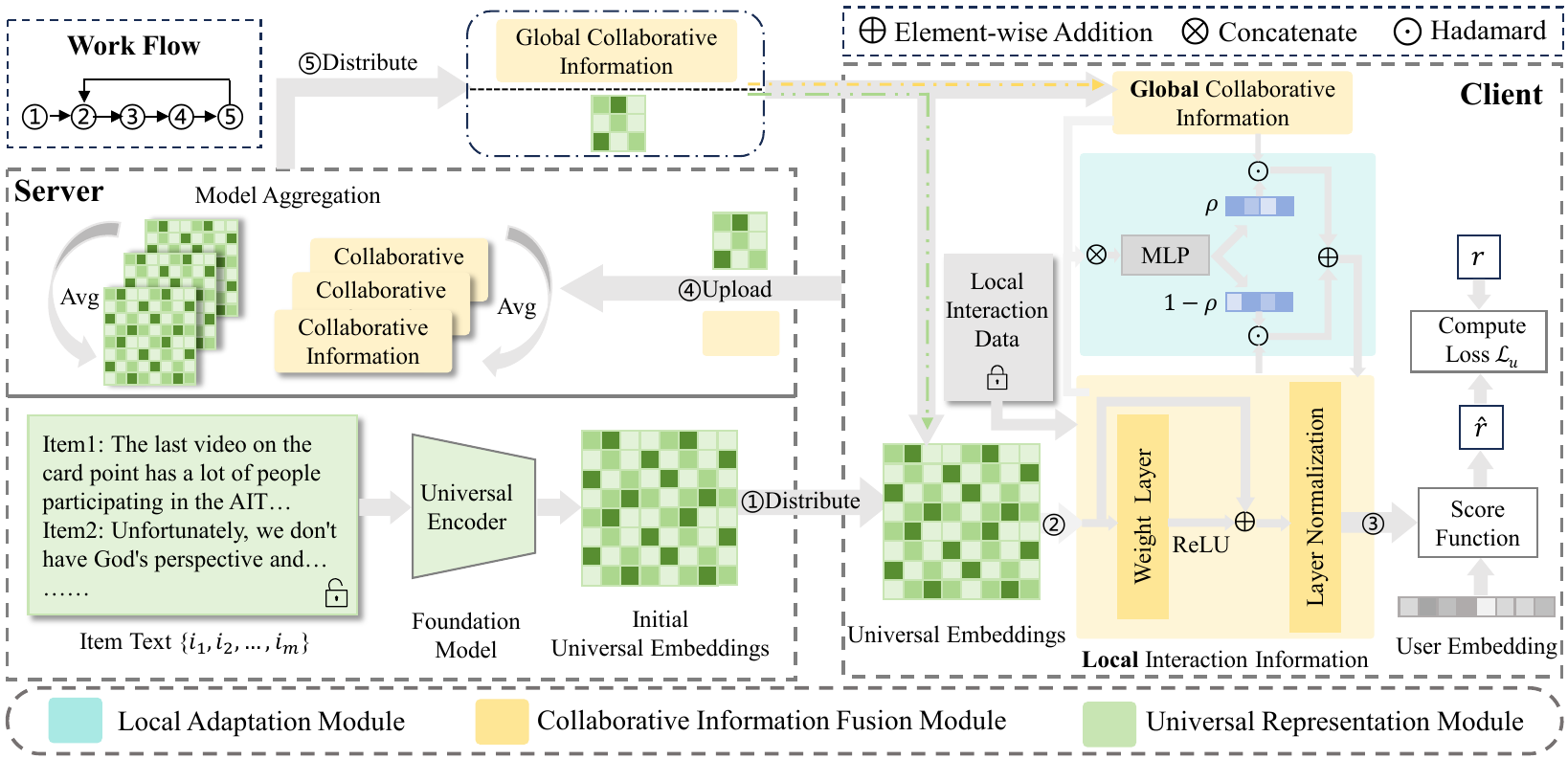} % Reduce the figure size so that it is slightly narrower than the column. Don't use precise values for figure width. This setup will avoid overfull boxes.
\caption{The overview framework of FedUTR. A foundation model extracts textual features as universal item embeddings. CIFM captures local interaction preferences, while LAM adaptively fuses global and local parameters. The server aggregates client parameters and redistributes the global model.}
% \caption{The overview framework of FedUTR. We deploy a foundation model to extract textual representations and distribute them to all participants as universal item embeddings. The CIFM is then designed to effectively capture users’ local interaction preferences. After each local training round, the server aggregates the uploaded parameters and redistributes the aggregated model to all clients. To preserve user-specific information, the LAM is applied to dynamically fuse global and local parameters.}
\label{fig:framework}
\end{figure*}

% 协同（Collaborative）不能滥用，协同一般指的是服务器中多个客户的的协同，或者说协同过滤中的协同，这里的CIFM改为IDFM，Global Augmented Information与Local Augmented Model（行为+文本），这里的augmented可以理解为在行为信息的基础上用文本模态来增强

% 还可以叫CIFM模块，只不过local beheaviour/collaborative information与global general information 

\subsection{Extracting Universal Representations}
The FM that is trained on vast corpora possesses rich general semantic knowledge and has the ability to effectively capture the modality information of items. To more efficiently leverage this generic knowledge and considering the resource constraints of clients in FRs, we deploy an FM on the server side. For simplicity, we select the parameter-efficient BERT model \cite{BERT} as the FM to extract item textual modality features for universal representation initialization. Specifically, given an item $i$, if we denote the FM as $\mathcal{F}$, the universal representation of item $i$ is formulated as:
\begin{equation}
    {\mathbf{E}_i} = \mathcal{F}\left(\left\{ { \texttt{[CLS]};t_1^i,t_2^i,...,t_k^i} \right\} \right),
    \label{eq:energy}
\end{equation}
where $\mathbf{E}_i \in \mathbb{R}^{d}$ is the initial universal representation of item $i$, $t_1^i$ denotes the first token in the textual modality of item $i$, and $k$ represents the total number of tokens in the textual sequence. \texttt{[CLS]} is a special token added in front of each input sequence, with the output vector at the \texttt{[CLS]} position serving as the holistic textual representation. Afterwards, the universal representations $\mathbf{E}$ are distributed to all clients as initial item universal embeddings (see Step 1 in Fig. \ref{fig:framework}). 
% Throughout subsequent training phases, the FM is no longer involved in training.

\subsection{Collaborative Information Fusion Module}
Given that the server-distributed initial universal embeddings contain only generic knowledge about items, we design a CIFM to enrich the universal representations with local interaction knowledge (see Step 2 in Fig. \ref{fig:framework}). In the CIFM, we introduce an MLP to capture local interaction knowledge from interaction data of users, and then employ residual connections to integrate the universal representations and interaction knowledge for preventing information dilution of item generic knowledge. Finally, we apply layer normalization to the fused embeddings to ensure consistent feature scaling:

\begin{equation}
    \mathbf{E}^F = \mathrm{LayerNorm}\left(\mathcal{G} ({\mathbf{E}}) \oplus {\mathbf{E}}\right),
    \label{eq:energy1}
\end{equation}
where $\mathcal{G}:\mathbb{R}^d \to \mathbb{R}^d$ is a single-layer MLP with ReLU activation, $\oplus$ represents element-wise addition, and $\mathbf{E},\mathbf{E}^F\in\mathbb{R}^{m\times d}$ denote the universal embeddings and the fused item embeddings, respectively. Each client computes the inner product between user embeddings and the non-interacted item embeddings in $\mathbf{E}^F$ to generate scores $\hat{r}$ for local recommendation (see Step 3 in Fig. \ref{fig:framework}).

Furthermore, considering the inherent client-side data sparsity in FRs, we locally incorporate a regularization term on the parameters of the CIFM into the optimization objective to mitigate overfitting risks. Formally, the regularized optimization objective is defined as:
\begin{equation}
    \mathcal{L}_{u}(\theta_u,\lambda; \mathcal{D}_u) = 
    \underbrace{\mathcal{L}_{\text{rec}}(\theta_u; \mathcal{D}_u)}_{\text{Recommendation Loss}} + \underbrace{\lambda \lVert \theta_u^{\rm{CIFM}} \rVert_1}_{\text{L1 Regularization}},
    \label{regularized_loss}
\end{equation}
where $\lambda$ is a hyperparameter controlling the regularization strength, $\theta_u^{\rm{CIFM}}$ represents the trainable parameters of the CIFM for user $u$, and $\Vert \cdot \rVert_1$ denotes an L1 norm. 

Upon completing a local training phase, all participating clients upload their locally updated universal embeddings and CIFM parameters to the central server for global parameter aggregation (see Step 4 in Fig. \ref{fig:framework}). The server then broadcasts the aggregated parameters back to all participating clients for subsequent training rounds or inference tasks. It is worth noting that the CIFM not only contains local interaction knowledge but also incorporates global collaborative information after global model aggregation. More details are provided in the next section.

\subsection{Local Adaptation Module}
The CIFM is designed to capture local interaction knowledge from user historical behaviors. This knowledge is then enhanced to collaborative information through global model aggregation, which improves its generalization capacity. However, if the local CIFM completely depends on the globally aggregated model, users will have access only to global collaborative information, losing their personalized interaction knowledge. Therefore, we propose an LAM that employs a gating mechanism to dynamically integrate the global and local parameters of CIFM, thereby effectively preserving user-specific preferences. In detail, we formalize the LAM as follows:
\begin{equation}
    \rho {\rm{ = \sigma}}\left( {{{\mathcal{G}}_{\rm{LAM}}}( {\theta _{g_{t-1}}^{\rm{CIFM}},\theta _{u_{t-1}}^{\rm{CIFM}}} )} \right),
    \label{eq:energy2}
\end{equation}

\begin{equation}
    \theta _{u_{t}}^{\rm{CIFM}} = \rho  \odot \theta_{g_{t-1}}^{\rm{CIFM}} {\rm{ + }}\left( {{\rm{1 - }}\rho } \right) \odot \theta _{u_{t-1}}^{\rm{CIFM}} ,
    \label{eq:energy3}
\end{equation}
where $\theta_{g_{t-1}}^{\rm{CIFM}}$ and $\theta _{u_{t-1}}^{\rm{CIFM}} $ denote the global parameters distributed by the server and the locally updated parameters of CIFM from the previous training round, respectively. $\sigma$ denotes the sigmoid activation function, ${\mathcal{G}}_{\rm{LAM}}$ denotes a parameterized network that generates the dynamic fusion weights $\rho \in \mathbb{R}^d$ for global and local parameters, $\odot$ represents the Hadamard product operator, and $\theta _{u_{t}}^{\rm{CIFM}} $ represents the fused CIFM parameters in the $t$-th training round. Note that the LAM is only applied to the CIFM, since the CIFM simultaneously captures local interaction knowledge and global collaborative information. In contrast, the universal embeddings, which capture generic item knowledge, remain consistent across all clients and do not require personalized model adjustments (see Step 5 in Fig. \ref{fig:framework}).

\subsection{FedUTR with Sparsity-Aware ResNet}
In our preliminary experiments, we observed that the importance of universal representations and interaction behavior information varies with different sparsity levels. To enhance each client’s ability to adaptively balance these two types of information under varying sparsity conditions, we design a sparsity-aware residual module based on FedUTR, as illustrated in Fig. \ref{fig:Resnet}.
Specifically, we introduce a sparsity-aware block that quantifies the local sparsity of each client by the number of interacted items. To mitigate the discrepancy across clients, we take the logarithm of this sparsity value as the final metric and feed it into the sparsity-aware block to generate dynamic weights. During the fusion process between the input and output of the residual block, we use these dynamic weights to adaptively balance the contributions of universal representations and interaction behavior information according to the client’s sparsity level. By replacing CIFM with the proposed sparsity-aware resnet module, we obtain a variant of FedUTR, named FedUTR-SAR.
% By replacing the original static residual block with the proposed sparsity-aware version, we obtain a variant of FedUTR, named $\textbf{FedUTR}$ with $\textbf{S}$parse-$\textbf{A}$ware $\textbf{R}$esnet ($\textbf{FedUTR-SAR}$).

\begin{figure}[t]
\centering
\includegraphics[width=\columnwidth]{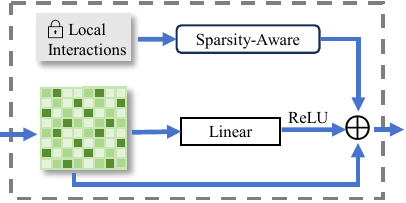} % Reduce the figure size so that it is slightly narrower than the column. Don't use precise values for figure width. This setup will avoid overfull boxes.
\caption{Sparsity-Aware ResNet Module. A sparsity-aware block is incorporated to measure local data sparsity and dynamically balance universal and personalized representations.}
\label{fig:Resnet}
\end{figure}

\subsection{Parameter Analysis}
To demonstrate the parameter efficiency of FedUTR, we compare the number of trainable parameters with representative baselines. As shown in Table \ref{tab:parameter}, while incorporating modality information, FedUTR requires only 29.97\%--41.53\% of the trainable parameters of FedMR, demonstrating its superior parameter efficiency.

% \begin{table}[h]
% \centering
% \caption{The number of trainable parameters.}
% \label{tab:parameter}
% \setlength{\tabcolsep}{6pt} % 压缩列间距（默认6pt）
% \begin{tabular}{lrrrrr}
% \toprule
% \textbf{Method} & \textbf{KU} & \textbf{Food} & \textbf{Dance} & \textbf{Movie} \\
% \midrule
%  \textbf{FCF} & 171,872 & 50,560 & 73,856 & 112,320 \\
%  \textbf{FedMR} & 420,420 & 177,796 & 224,388 & 301,316 \\
% \midrule
%  \textbf{FedUTR}  & 174,594 & 53,282 & 76,578 & 115,042 \\
% \bottomrule
% \end{tabular}
% \end{table}

\begin{table}[h]
\centering
\caption{The amount of trainable parameters (unit: Byte).}
\label{tab:parameter}
\setlength{\tabcolsep}{6pt} % 压缩列间距（默认6pt）
\begin{tabular}{lrrrrr}
\toprule
\textbf{Method} & \textbf{KU} & \textbf{Food} & \textbf{Dance} & \textbf{Movie} \\
\midrule
 \textbf{FCF} & 687,488 & 202,240 & 295,424 & 449,280 \\
 \textbf{FedMR} & 1,681,680 & 711,184 & 897,552 & 1,205,264 \\
\midrule
 \textbf{FedUTR}  & 698,376 & 213,128 & 306,312 & 460,168 \\
\bottomrule
\end{tabular}
\end{table}

% \subsection{Privacy-preserving Enhanced FedUTR}
% FedUTR, following the federated learning settings, preserves users’ private data locally during model training, thereby significantly reducing the risk of privacy leakage. To further enhance the privacy-preserving capabilities of FedUTR, we apply local differential privacy (LDP) to item embeddings before uploading the model to the server. Specifically, we achieve LDP by injecting zero-mean Laplace noise into client-side item embeddings, which can be formulated as:
% \begin{equation}
% \mathbf{E}_i = \mathbf{E}_i + \mathbf{X}, \ \text{where } \mathbf{X} \sim \text{Laplace}\left( 0, \delta \right), \ \delta = \Delta_f/\varepsilon,
% \label{eq:noise_mechanism}
% \end{equation}
% where $\mathbf{X} \in \mathbb{R}^{d}$ follows the Laplace distribution with zero mean and scale parameter $\delta$, $\Delta_f$ represents the L1-sensitivity of the embedding vectors, and $\varepsilon$ controls the privacy budget. An increase in $\delta$ leads to a decrease in the privacy budget $\varepsilon$, which in turn strengthens privacy protection. 

\section{Convergence Analysis of FedUTR}\
\label{section 4}
% \label{sec:convergence_compared}

In this section, we analyze the convergence behavior of FedUTR
by building upon the convergence analysis of FedAvg \cite{fedavg_convergence}, 
and focus on how the three proposed modules affect the convergence properties.
We adopt Assumptions 1--4 in \cite{fedavg_convergence}, summarized as follows:
(i) each $F_u$ is $L$-smooth;
(ii) each $F_u$ is $\mu$-strongly convex;
(iii) the stochastic gradient variance on each client is bounded by
$\sigma_u^2$;
(iv) the expected gradient norm is bounded by $G$;
Under these assumptions, FedAvg achieves an $\mathcal{O}(1/T)$ convergence rate.
Throughout the analysis, we define the optimal objective value as
$F^\star \triangleq F(\theta^\star)$.

\begin{lemma}[Effect of URM]
\label{lem:urm}
Consider FedUTR with the URM. If URM only modifies the initialization of model parameters and does not alter the local optimization procedure, then FedUTR with URM achieves the $\mathcal{O}(1/T)$ convergence rate.
\end{lemma}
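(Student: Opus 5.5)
The plan is to reduce the statement directly to the FedAvg convergence theorem of \cite{fedavg_convergence}. Under Assumptions (i)--(iv) and a decaying step size $\eta_t = \frac{2}{\mu(\gamma+t)}$, with $\gamma = \max\{8L/\mu, E\}$ and $E$ the number of local steps, that theorem gives
\begin{equation*}
\mathbb{E}[F(\bar\theta_T)] - F^\star \le \frac{\kappa}{\gamma+T-1}\left(\frac{2B}{\mu} + \frac{\mu\gamma}{2}\,\mathbb{E}\lVert \bar\theta_1-\theta^\star\rVert^2\right).
\end{equation*}
Here $\kappa = L/\mu$, and $B$ depends only on $\sigma_u^2$, $G$, $L$, $E$, and the heterogeneity gap $\Gamma = F^\star - \sum_u p_u F_u^\star$. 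The key observation is that the initialization enters this bound only through the term $\mathbb{E}\lVert \bar\theta_1-\theta^\star\rVert^2$. Every other constant is determined by the local objectives $F_u$, the sampling and aggregation scheme, and the local SGD procedure.

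The proof proceeds in three steps.

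\textbf{Step 1.} Formalize the hypothesis. The URM replaces the initial item embeddings with $\mathbf{E}_i = \mathcal{F}(\cdot)$ from Eq.~\eqref{eq:energy}, which is a deterministic, server-computed vector that is fixed before training. Because the local update rule, the aggregation rule, and the losses $F_u$ are unchanged by assumption, the iterates $\{\bar\theta_t\}_{t\ge 1}$ follow exactly the same recursion as in FedAvg. The only difference is the starting point $\bar\theta_1 = \theta_1^{\rm URM}$.

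\textbf{Step 2.} Check that the per-round one-step analysis of \cite{fedavg_convergence} is unaffected. This analysis bounds $\mathbb{E}\lVert \bar\theta_{t+1}-\theta^\star\rVert^2$ by $(1-\mu\eta_t)\mathbb{E}\lVert \bar\theta_t-\theta^\star\rVert^2 + \eta_t^2 B$, and it uses only Assumptions (i)--(iv). The induction $\mathbb{E}\lVert\bar\theta_t-\theta^\star\rVert^2 \le v/(\gamma+t)$ then closes with
\begin{equation*}
v = \max\left\{\frac{4B}{\mu^2},\ (\gamma+1)\,\lVert \theta_1^{\rm URM}-\theta^\star\rVert^2\right\}.
\end{equation*}

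\textbf{Step 3.} Apply $L$-smoothness, $F(\bar\theta_T)-F^\star \le \frac{L}{2}\lVert\bar\theta_T-\theta^\star\rVert^2$, to obtain the $\mathcal{O}(1/T)$ rate with the constant $v$.

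The main obstacle is Step 2, and specifically the requirement that $\lVert\theta_1^{\rm URM}-\theta^\star\rVert$ be finite. Assumption (iv) on bounded gradient norms must also hold along the new trajectory, because it is stated as a global assumption. The initialization is finite: BERT outputs are bounded, since LayerNorm'd hidden states have bounded norm, and the remaining parameters are initialized as in FedAvg. So the initial distance is a finite constant, and the rate is preserved.

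One can also remark that a semantically informed initialization that lies closer to $\theta^\star$ reduces the second term of $v$. This improves the constant but does not change the order of the rate.
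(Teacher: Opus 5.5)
Your proposal is correct and matches the paper's proof. Both note that URM changes only the starting point $\theta_1^{\mathrm{URM}}$, while the objective, local updates and aggregation stay those of FedAvg, so the convergence theorem of \cite{fedavg_convergence} applies directly with $\mathbb{E}\|\theta_1^{\mathrm{URM}}-\theta^\star\|^2$ as the only modified quantity. The paper cites that theorem as a black box, whereas your Steps 2--3 re-derive its one-step recursion, induction constant $v$ and smoothness step, which adds detail but is not a different route.
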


\begin{proof}
URM initializes the item embeddings using modality features extracted by a foundation model on the server, while the local update rule and the global aggregation follow the standard FedAvg scheme. Therefore, the optimization trajectory of FedUTR with URM differs from FedAvg only in the initial model parameters.
Let $\theta^{\mathrm{URM}}_1$ denote the initialization.
According to Theorem~2 in \cite{fedavg_convergence}, we obtain
\begin{equation}
\begin{aligned}
\mathbb{E}[F(\theta_T)] - F^\star
\le
\frac{\kappa}{\gamma + T - 1}
\left(
\frac{2B}{\mu}
+
\frac{\mu \gamma}{2}
\mathbb{E}\|\theta_1^{\mathrm{URM}} - \theta^\star\|^2
\right),
\end{aligned}
\end{equation}
where $B$ is a constant depending on the stochastic gradient variance, data heterogeneity, and the boundedness of local gradients. 
\end{proof}

\begin{lemma}[Effect of CIFM]
\label{lem:cifm}
Consider FedUTR with the CIFM. Suppose that each client minimizes a composite objective consisting of a smooth loss and a convex regularizer, and performs proximal local updates. Then FedUTR with CIFM achieves the $\mathcal{O}(1/T)$ convergence rate.
\end{lemma}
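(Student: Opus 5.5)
The plan is to treat FedUTR with CIFM as FedAvg applied to the composite objective $F_u(\theta)=f_u(\theta)+\lambda\lVert\theta^{\rm CIFM}\rVert_1$. Here $f_u$ is the smooth recommendation loss $\mathcal{L}_{rec}$, which satisfies assumptions (i)--(iv), and the L1 term from \eqref{regularized_loss} is convex but nonsmooth. Each local step is a proximal SGD step,
\begin{equation}
\theta_u^{k+1}=\mathrm{prox}_{\eta_k\lambda\lVert\cdot\rVert_1}\bigl(\theta_u^{k}-\eta_k\nabla f_u(\theta_u^k;\xi_u^k)\bigr),
\end{equation}
that is, soft-thresholding on the CIFM coordinates and the identity on the others. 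The server then averages the iterates exactly as in FedAvg. The goal is to reproduce the one-step recursion of Lemma 1 in \cite{fedavg_convergence} for the virtual averaged sequence $\bar\theta_t$, namely
\begin{equation}
\mathbb{E}\|\bar\theta_{t+1}-\theta^\star\|^2\le(1-\mu\eta_t)\,\mathbb{E}\|\bar\theta_t-\theta^\star\|^2+\eta_t^2 B',
\end{equation}
with a constant $B'$ that absorbs $\lambda$. From this, the diminishing stepsize $\eta_t=2/(\mu(\gamma+t))$ and the same induction used in Theorem 2 of \cite{fedavg_convergence} give the bound $\frac{\kappa}{\gamma+T-1}\bigl(\frac{2B'}{\mu}+\frac{\mu\gamma}{2}\mathbb{E}\|\theta_1-\theta^\star\|^2\bigr)$. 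This has the same form as in Lemma~\ref{lem:urm}, so the rate is $\mathcal{O}(1/T)$.

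The key steps, in order, are as follows.

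\textbf{Step 1: nonexpansiveness.} Because $\lambda\lVert\cdot\rVert_1$ is convex, its proximal map is firmly nonexpansive. Moreover, $\theta^\star$ is a fixed point of the proximal gradient map: $\theta^\star=\mathrm{prox}_{\eta\lambda\lVert\cdot\rVert_1}(\theta^\star-\eta\nabla f(\theta^\star))$. This lets me compare one local step against $\theta^\star$ exactly as in the smooth case, using the usual strong convexity and smoothness expansion of $\|\theta-\eta\nabla f-(\theta^\star-\eta\nabla f(\theta^\star))\|^2$.

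\textbf{Step 2: stochastic noise and drift.} I bound the stochastic variance term by $\sum_u p_u^2\sigma_u^2$, as in assumption (iii). I bound local drift over $E$ local steps by $4\eta_t^2(E-1)^2G^2$ after redefining $G$. The redefinition bounds the norm of a subgradient of $F_u$, which picks up at most an additive $\lambda\sqrt{d_{\rm CIFM}}$ because $\|\partial\lVert\cdot\rVert_1\|_\infty\le1$.

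\textbf{Step 3: heterogeneity.} I take $\Gamma=F^\star-\sum_u p_uF_u^\star$ to be defined with the composite objectives. Collecting constants then gives $B'=\sum_u p_u^2\sigma_u^2+6L\Gamma+8(E-1)^2G'^2$.

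The main obstacle is the averaging step. In FedAvg the average of local gradient steps equals a gradient step on the averaged iterate up to drift. With proximal steps, however, the average of proximal outputs is not the proximal output of the average, so $\bar\theta_{t+1}$ is not an exact proximal step. I would first handle this by writing each proximal step as $\theta-\eta(\nabla f_u+s_u)$ with $s_u\in\partial\lambda\lVert\theta_u^{k+1}\rVert_1$, i.e.\ an implicit subgradient. The averaged update then becomes a perturbed subgradient step with a bounded perturbation $\|s_u\|\le\lambda\sqrt{d}$. The resulting cross term $\langle\bar\theta_t-\theta^\star,\sum_u p_u s_u\rangle$ can be controlled by convexity of the regularizer plus the drift bound from Step 2, at an $\mathcal{O}(\eta_t^2)$ cost.

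If that inequality chain does not close, the fallback is to invoke a known FedProx or federated composite optimization result (e.g.\ FedDualAvg-type analyses) under assumptions (i)--(iv). That route yields the same $\mathcal{O}(1/T)$ rate for strongly convex composite objectives.

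Finally, the LAM gating in \eqref{eq:energy3} is set aside here: the statement concerns CIFM alone.
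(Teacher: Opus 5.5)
Your route is the paper's. Both treat CIFM as FedAvg on the composite objective and plug a modified constant into Theorem~2 of \cite{fedavg_convergence}. Your $B'$ is exactly the paper's $B_{\mathrm{CIFM}}$, which the paper gets by asserting that the FedAvg analysis ``can be extended''. You go further by noticing that an average of proximal outputs is not a proximal step. Your implicit-subgradient rewriting does close the \emph{distance} recursion. Convexity of $h=\lambda\lVert\cdot\rVert_1$ at $\theta_u^{k+1}$ turns the $s_u$ cross term into two pieces: a gap term $h(\theta^\star)-h(\theta_u^{k+1})$, which pairs with the one coming from $f_u$, and an $\mathcal{O}(\eta_t^2)$ drift term, because $h$ is globally $\lambda\sqrt{d_{\mathrm{CIFM}}}$-Lipschitz. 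Step~1 is not needed for this, and as stated it is not valid: $\theta^\star$ is a fixed point of the global map $\mathrm{prox}(\cdot-\eta\nabla f)$, not of client $u$'s map built from $\nabla f_u$.

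The gap is in the two places where \cite{fedavg_convergence} uses $L$-smoothness of the objective itself. You inherit both when you ``collect constants'' and reuse ``the same induction'', and both fail for $F_u=f_u+\lambda\lVert\theta^{\mathrm{CIFM}}\rVert_1$. A one-dimensional example shows it. Take $F(w)=\tfrac{\mu}{2}w^2+\lambda|w|$. Then $F^\star=0$ at $w=0$, every subgradient at $w\neq 0$ has norm at least $\lambda$, and $F(w)-F^\star\ge\lambda|w|$.

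\textbf{First failure.} The $6L\Gamma$ term comes from $\lVert\nabla F_k(w)\rVert^2\le 2L(F_k(w)-F_k^\star)$, which this example violates for small $|w|$. You must apply that inequality to $f_u$ alone. Heterogeneity is then measured against $\min f_u$, and extra $\lambda^2 d_{\mathrm{CIFM}}$ terms appear. This changes $B'$ but not the rate.

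\textbf{Second failure, the serious one.} The $\kappa/(\gamma+T-1)$ bound comes from $\mathbb{E}[F(\bar\theta_T)]-F^\star\le\frac{L}{2}\mathbb{E}\lVert\bar\theta_T-\theta^\star\rVert^2$, which requires $\nabla F(\theta^\star)=0$. For the composite objective, the example shows that $\mathbb{E}\lVert\bar\theta_T-\theta^\star\rVert^2=\mathcal{O}(1/T)$ only implies an $\mathcal{O}(1/\sqrt{T})$ objective gap by this route. So your argument gives $\mathcal{O}(1/T)$ for the squared iterate distance, not for $F$. To get $\mathcal{O}(1/T)$ in objective value, you would need to keep the nonpositive term proportional to $-\eta_t\bigl(F(\bar\theta_t)-F^\star\bigr)$ that Lemma~1 of \cite{fedavg_convergence} discards. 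You would then report a weighted average of the iterates, with weights growing linearly in $t$.

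The paper's proof makes the same unjustified substitution, so it shares this gap. Your fallback does not repair it either. FedDualAvg-type results analyze server-side dual averaging, not the primal averaging of proximal outputs used here. FedProx's proximal term is a penalty toward the global model, not the $\ell_1$ prox.
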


\begin{proof}
With CIFM, the local objective on client $u$ is given by
\begin{equation}
\mathcal{L}_u(\theta)
=
F_u(\theta)
+
\lambda \|\theta_{\mathrm{CIFM}}\|_1,
\end{equation}
which defines a composite optimization problem with a smooth loss term and a convex regularizer.

By performing proximal local updates, this setting is equivalent to a proximal variant of FedAvg. Under the standard smoothness, bounded variance, and bounded gradient assumptions in \cite{fedavg_convergence}, the convergence analysis of FedAvg can be extended to this composite objective.
Specifically, we denote by $\sigma_{u,\mathrm{CIFM}}^2$ the variance bound of the stochastic gradient of $\mathcal{L}_u(\theta)$, by $\Gamma_{\mathrm{CIFM}} := F^\star - \sum_{u=1}^{N} p_u F_{u,\mathrm{CIFM}}^\star$ the induced heterogeneity measure, and by $G_{\mathrm{CIFM}}$ an upper bound on $\|\nabla \mathcal{L}_u(\theta)\|$. Accordingly, the constant $B$ in the convergence bound \cite{fedavg_convergence} is replaced by
\begin{equation}
B_{\mathrm{CIFM}}
=
\sum_{u=1}^{N} p_u^2 \sigma_{u,\mathrm{CIFM}}^2
+
6L \Gamma_{\mathrm{CIFM}}
+
8 (E-1)^2 G_{\mathrm{CIFM}}^2.
\end{equation}

We denote the resulting variance and heterogeneity bounds by $\sigma_{u,\mathrm{CIFM}}^2$ and $\Gamma_{\mathrm{CIFM}}$, respectively. Applying Theorem~2 in \cite{fedavg_convergence} yields
\begin{equation}
\mathbb{E}[F(\theta_T)] - F^\star
\le
\frac{\kappa}{\gamma + T - 1}
\left(
\frac{2B_{\mathrm{CIFM}}}{\mu}
+
\frac{\mu \gamma}{2}
\mathbb{E}\|\theta_1 - \theta^\star\|^2
\right).
\end{equation}

\end{proof}

\begin{lemma}[Effect of LAM]
\label{lem:lam}
Consider FedUTR with the LAM. If the local model update is given by a convex combination of the global and local parameters, then the local update drift is contractive. As a result, FedUTR with LAM achieves the $\mathcal{O}(1/T)$ convergence rate.
\end{lemma}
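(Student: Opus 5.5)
The plan has two parts: first show that the LAM fusion step is non-expansive with respect to the divergence between local and global parameters, and then insert that bound into the FedAvg analysis of \cite{fedavg_convergence}, in the same way the proofs of Lemma~\ref{lem:urm} and Lemma~\ref{lem:cifm} did. Write $\rho\in[0,1]^d$ for the sigmoid gate in Eq.~\eqref{eq:energy2}. By Eq.~\eqref{eq:energy3},
\begin{equation}
\theta_{u_t}^{\mathrm{CIFM}}-\theta_{g_{t-1}}^{\mathrm{CIFM}}=(1-\rho)\odot\bigl(\theta_{u_{t-1}}^{\mathrm{CIFM}}-\theta_{g_{t-1}}^{\mathrm{CIFM}}\bigr).
\end{equation}
This is a coordinatewise scaling by factors in $[0,1]$, so
\begin{equation}
\|\theta_{u_t}^{\mathrm{CIFM}}-\theta_{g_{t-1}}^{\mathrm{CIFM}}\|\le(1-\rho_{\min})\|\theta_{u_{t-1}}^{\mathrm{CIFM}}-\theta_{g_{t-1}}^{\mathrm{CIFM}}\|,
\end{equation}
with $\rho_{\min}=\min_j\rho_j$. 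The sigmoid is strictly positive, so $\rho_{\min}>0$ whenever the output of $\mathcal{G}_{\mathrm{LAM}}$ is bounded. Under that condition the map is strictly contractive. In any case it is non-expansive, and this establishes the first claim of Lemma~\ref{lem:lam}. The universal embeddings are not gated, so for them the deviation is exactly zero, as in plain FedAvg.

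Next, I bound the quantity that actually enters the analysis of \cite{fedavg_convergence}: the local drift $\mathbb{E}\sum_u p_u\|\bar\theta_t-\theta_t^u\|^2$. In the FedAvg proof this quantity is controlled by $4\eta_t^2(E-1)^2G^2$, because the drift is generated by at most $E-1$ local SGD steps, each of length at most $\eta_t G$, started from the synchronized point. With the LAM, the starting point of round $t$ is no longer $\bar\theta$ but the fused point. Its distance to $\bar\theta$ is at most $(1-\rho_{\min})$ times the drift accumulated in the previous round.

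Unrolling this recursion gives a geometric series. The total drift becomes
\begin{equation}
4\eta_t^2(E-1)^2G^2\sum_{k\ge0}(1-\rho_{\min})^{2k}=\frac{4\eta_t^2(E-1)^2G^2}{1-(1-\rho_{\min})^2}.
\end{equation}
Here I use that $\eta_t$ is non-increasing and that $\eta_{t-E}\le2\eta_t$, which is already part of the FedAvg proof. The constant $B$ is therefore replaced by
\begin{equation}
B_{\mathrm{LAM}}=\sum_u p_u^2\sigma_u^2+6L\Gamma+\frac{8(E-1)^2G^2}{1-(1-\rho_{\min})^2}.
\end{equation}
Applying Theorem~2 of \cite{fedavg_convergence} with $B_{\mathrm{LAM}}$ then yields the same $\frac{\kappa}{\gamma+T-1}\bigl(\frac{2B_{\mathrm{LAM}}}{\mu}+\frac{\mu\gamma}{2}\mathbb{E}\|\theta_1-\theta^\star\|^2\bigr)$ bound, i.e. $\mathcal{O}(1/T)$.

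The main obstacle is making the recursion precise. The one-step descent inequality of \cite{fedavg_convergence} assumes that all clients restart from $\bar\theta$ after each synchronization, and the LAM breaks this assumption. It has to be checked that the averaged iterate $\bar\theta_t$ still follows an SGD-like recursion with a bias term controlled by the drift. My approach is to treat the residual $(1-\rho)\odot(\theta_u-\theta_g)$ as an extra perturbation of the virtual sequence, and to absorb it through the same Young's-inequality step that the original proof uses for the drift.

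The second concern is that $\rho$ depends on the parameters. I would assume that $\mathcal{G}_{\mathrm{LAM}}$ has bounded output, which gives the uniform lower bound $\rho_{\min}>0$. If that assumption is unavailable, I fall back to the non-expansive case. There I still recover $\mathcal{O}(1/T)$ by bounding the inherited drift directly by $2\eta_t(E-1)G$ per round, which keeps the drift term $\mathcal{O}(\eta_t^2)$ with a larger constant.
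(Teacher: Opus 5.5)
Your proposal has a genuine gap, and it sits at the step you flag as the main obstacle: Young's inequality cannot absorb the residual. By Eq.~\eqref{eq:energy2} the gate is computed from $\theta_{u_{t-1}}^{\mathrm{CIFM}}$, so it is client-specific, $\rho_u$. Your $\rho_{\min}=\min_j\rho_j$ treats it as one vector shared by all clients. With $\theta_g=\sum_u p_u\theta_u$, the $p$-weighted average of the fused models is $\theta_g+\sum_u p_u(1-\rho_u)\odot(\theta_u-\theta_g)$. So fusion moves the virtual sequence $\bar\theta$ itself by a vector $b_t$ whose size is that of the drift, $O(\eta_t(E-1)G)$, in a systematic direction.

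This differs from the drift term of \cite{fedavg_convergence}. There the drift enters only through the points at which gradients are evaluated, and so carries an extra factor $\eta_t$. Here the cross term $2\langle\bar\theta_t-\theta^\star,b_t\rangle$, after Young's inequality with weight $\mu\eta_t$, leaves $\|b_t\|^2/(\mu\eta_t)=O(\eta_t)$ per round. The recursion then no longer gives $O(1/T)$.

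This is not an artifact of the estimate. Take two clients with $p_1=p_2=\tfrac{1}{2}$, $F_1(\theta)=\tfrac{1}{2}(\theta-a)^2$, $F_2(\theta)=\tfrac{1}{2}(\theta-b)^2$, $a\neq b$, $E=1$, exact gradients, and constant gates $\rho_1=1$, $\rho_2=\rho\in(0,1)$. Then $z=\frac{\rho}{1+\rho}\theta_1+\frac{1}{1+\rho}\theta_2$ is invariant under aggregation followed by fusion, and it obeys $z_{t+1}=z_t-\eta_t\bigl(z_t-\frac{\rho a+b}{1+\rho}\bigr)$. Meanwhile $\theta_1-\theta_2$ is multiplied by $\frac{1-\rho}{2}$ each round, up to an $O(\eta_t)$ term. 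Hence the global model converges to $\frac{\rho a+b}{1+\rho}\neq\frac{a+b}{2}=\theta^\star$, and $F(\theta_T)-F^\star$ stays bounded away from zero. This happens even though every fusion is a convex combination with gates bounded away from $0$.

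You therefore need an extra hypothesis, in one of two forms:
\begin{itemize}
\item A gate common to all clients. Fusion then preserves $\sum_u p_u\theta_u$, the one-step descent lemma of \cite{fedavg_convergence} holds verbatim, and your inherited-drift recursion is exactly the right repair.
\item $\max_u\|1-\rho_u^t\|_\infty=O(\eta_t)$, which makes $\|b_t\|=O(\eta_t^2)$ and hence absorbable.
\end{itemize}

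There are two smaller points.
\begin{itemize}
\item Your fallback for gates without a uniform lower bound fails. With no contraction factor, nothing damps the inherited deviation. The only available bound is the cumulative $\sum_r\eta_r(E-1)G$, which diverges for $\eta_t\propto1/t$. In the limit of vanishing gates the scheme becomes pure local training, where each client converges to its own minimizer.
\item In the unrolled series you need $\eta_{t-kE}$ for every $k$. The bound $\eta_{t-E}\le2\eta_t$ alone only gives $\eta_{t-kE}\le2^k\eta_t$. Split the sum at $kE\approx(t+\gamma)/2$ and let the geometric factor kill the tail. Also put a $(1+\epsilon)$ factor on the inherited part when adding squared norms. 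Both fixes are routine.
\end{itemize}

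For comparison, the paper's proof shares your first step, the contraction of the fused parameters toward $\theta_g^{t-1}$, but does no unrolling. It asserts that LAM keeps local iterates in a smaller neighbourhood of the global model. It then replaces $G$ by a supremum $\tilde G\le G_{\mathrm{CIFM}}$ over that region and substitutes $8(E-1)^2\tilde G^2$ into the bound of \cite{fedavg_convergence}. It never confronts the loss of synchronized restarts that you correctly identified, nor the displacement of the average. So it does not supply the missing ingredient either.

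Your bookkeeping, where valid, also points the other way from the paper. FedAvg's exact reset is the case $\rho=1$, and partial fusion enlarges the drift constant by $1/(1-(1-\rho_{\min})^2)$ rather than shrinking it.
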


\begin{proof}
LAM updates the local model according to
\begin{equation}
\theta_u^{t}
=
\rho_u^{t} \odot \theta_g^{t-1}
+
(1-\rho_u^{t}) \odot \theta_u^{t-1},
\quad \rho_u^{t} \in (0,1).
\end{equation}
This update defines a convex combination between the global model
and the locally updated parameters.
By the non-expansiveness of convex combinations, we have
\begin{equation}
\|\theta_u^{t} - \theta_g^{t-1}\|
\le
(1-\rho_u^{t})
\|\theta_u^{t-1} - \theta_g^{t-1}\|,
\end{equation}
which shows that LAM induces a contraction mapping toward the global model and restricts the deviation of local models from the global parameter.

In the FedAvg convergence analysis, the dominant drift-related term
arises from multiple local updates and is upper bounded by
$8(E-1)^2 G^2$, where $G$ bounds the gradient norm over the entire
parameter space. Under LAM, local parameters are constrained to a shrinking neighborhood around the global model.
We therefore define the effective gradient bound $\tilde{G}$ as the
supremum of gradient norms over the parameter induced by LAM, which satisfies
$\tilde{G} \le G_{\mathrm{CIFM}}$.

Accordingly, the drift-related term admits a tighter upper bound
$8(E-1)^2 \tilde{G}^2$.
The resulting constant in the convergence bound becomes
\begin{equation}
B_{\mathrm{FedUTR}}
=
\sum_{u=1}^{N} p_u^2 \sigma_{u,\mathrm{CIFM}}^2
+
6L \Gamma_{\mathrm{CIFM}}
+
8 (E-1)^2 \tilde{G}^2,
\end{equation}
which satisfies $B_{\mathrm{FedUTR}} \le B_{\mathrm{CIFM}}$.

Under the contraction induced by LAM, this term admits a tighter upper bound
$8(E-1)^2 \tilde{G}^2$ with $\tilde{G} < G$. We have
\begin{equation}
\begin{aligned}
&\mathbb{E}[F(\theta_T)] - F^\star \\
&\le
\frac{\kappa}{\gamma + T - 1}
\left(
\frac{2B_{\mathrm{FedUTR}}}{\mu}
+
\frac{\mu \gamma}{2}
\mathbb{E}\|\theta_1 - \theta^\star\|^2
\right).
\end{aligned}
\end{equation}
\end{proof}

\begin{theorem}[Convergence of FedUTR]
\label{thm:FedUTR-final}
Under the standard smoothness, strong convexity, bounded variance, and bounded gradient assumptions in \cite{fedavg_convergence}, FedUTR achieves an $\mathcal{O}(1/T)$ convergence rate.
In particular, the following convergence bound holds for FedUTR
\begin{equation}
\begin{aligned}
\label{eq:FedUTR-final}
&\mathbb{E}[F(\theta_T)] - F^\star \\
&\le
\frac{\kappa}{\gamma + T - 1}
\left(
\frac{2B_{\mathrm{FedUTR}}}{\mu}
+
\frac{\mu \gamma}{2}
\mathbb{E}\|\theta_1^{\mathrm{URM}} - \theta^\star\|^2
\right).
\end{aligned}
\end{equation}
\end{theorem}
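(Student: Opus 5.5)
The plan is to assemble the bound in \eqref{eq:FedUTR-final} by composing Lemmas~\ref{lem:urm}, \ref{lem:cifm} and \ref{lem:lam}. The modules act on three separate ingredients of the FedAvg bound (Theorem~2 in \cite{fedavg_convergence}). URM changes only the initial point $\theta_1$. CIFM changes the local objective, and hence the variance and heterogeneity constants. LAM changes only how the local iterate is formed at the start of each round, and hence the drift term. Because these ingredients enter the bound additively and independently, the combined algorithm should satisfy the FedAvg template with every constant replaced simultaneously.

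First, I would write one round of FedUTR explicitly. It consists of the LAM fusion \eqref{eq:energy3} applied to the CIFM block, then $E$ local proximal SGD steps on $\mathcal{L}_u$, then weighted averaging. I would note that the universal embeddings follow exactly the FedAvg rule, while the CIFM block follows the proximal FedAvg rule of Lemma~\ref{lem:cifm}. Applying Lemma~\ref{lem:cifm} fixes the per-step analysis: the one-step descent inequality holds with constants $\sigma_{u,\mathrm{CIFM}}^2$, $\Gamma_{\mathrm{CIFM}}$ and $G_{\mathrm{CIFM}}$. Since the $\ell_1$ prox is non-expansive, the one-step recursion of \cite{fedavg_convergence} survives with the same $\mu$, $L$ and step sizes $\eta_t = 2/(\mu(\gamma+t))$.

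Second, I would insert the LAM step. The convex-combination contraction from Lemma~\ref{lem:lam} keeps every local iterate within a neighbourhood of the current global model. Consequently, the bound on the divergence term $\mathbb{E}\sum_u p_u\|\bar\theta_t-\theta_u^t\|^2$ may use $\tilde G$ in place of $G_{\mathrm{CIFM}}$. This gives the recursion
\[
\Delta_{t+1}\le(1-\eta_t\mu)\Delta_t+\eta_t^2B_{\mathrm{FedUTR}} .
\]
The standard induction, which shows $\Delta_t\le v/(\gamma+t)$ with $v=\max\{4B_{\mathrm{FedUTR}}/\mu^2,\gamma\Delta_1\}$, then goes through unchanged. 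Third, the induction starts from $\Delta_1=\mathbb{E}\|\theta_1^{\mathrm{URM}}-\theta^\star\|^2$, which is exactly what Lemma~\ref{lem:urm} permits. Finally, $L$-smoothness gives $\mathbb{E}[F(\theta_T)]-F^\star\le \frac{L}{2}\Delta_T$, and substituting $\kappa=L/\mu$ yields \eqref{eq:FedUTR-final} together with the $\mathcal{O}(1/T)$ rate.

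The main obstacle is the second step, because the LAM fusion is not a gradient step. It mixes the global model with a stale local model, so the iterate that enters local training is not the broadcast average that the FedAvg analysis assumes. This could break the unbiasedness identity $\bar v_t=\sum_u p_u v_u^t$ used in \cite{fedavg_convergence}. My first approach is to treat the fusion as a perturbation. Because $\rho\in(0,1)$ componentwise and the contraction holds, the extra deviation is bounded by $(1-\rho)$ times the previous drift, which is itself $O(\eta_t^2(E-1)^2\tilde G^2)$. That would let it be absorbed into the drift term of $B_{\mathrm{FedUTR}}$, at the cost of a constant factor. If absorbing it fails, I would instead restrict the strongly convex analysis to the shared parameters, the embeddings, and treat the personalized CIFM block as a bounded perturbation of the heterogeneity term, folding it into $\Gamma_{\mathrm{CIFM}}$.
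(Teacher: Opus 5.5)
Your outline is the paper's own proof. The paper likewise chains Lemmas~\ref{lem:urm}--\ref{lem:lam}, asserts that the LAM contraction turns the drift term into $8(E-1)^2\tilde G^2$, and substitutes $B_{\mathrm{FedUTR}}$ and $\theta_1^{\mathrm{URM}}$ into the FedAvg bound. The step you flag as the main obstacle is exactly the one the paper only asserts, and your repair does not close it.

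Replacing $G_{\mathrm{CIFM}}$ by a smaller supremum $\tilde G$ is harmless. What fails is the \emph{form} $4\eta_t^2(E-1)^2G^2$ of the divergence bound in \cite{fedavg_convergence}. That form relies on the reset $\theta_u^{t_0}=\bar\theta_{t_0}$ at every synchronization, and LAM removes it. The ``contraction'' of Lemma~\ref{lem:lam} is just the identity $\theta_u^{t}-\theta_g^{t-1}=(1-\rho_u^{t})\odot(\theta_u^{t-1}-\theta_g^{t-1})$, whereas FedAvg makes this quantity zero.

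\textbf{Common $\rho$.} Here the average is preserved, but the carried-over deviation accumulates geometrically. The divergence is then of order $\eta_t^2\big((E-1)+E(1-\rho)/\rho\big)^2G^2$. This is larger than FedAvg's bound and nonzero even at $E=1$. So your ``constant factor'' is $\ge 1$ and $\rho$-dependent, and $8(E-1)^2\tilde G^2$ is not a valid drift bound.

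\textbf{Client-dependent $\rho_u^t$.} Lemma~\ref{lem:lam} and \eqref{eq:energy2} both allow $\rho_u^t$ to differ across clients. Your worry about the averaging identity is then justified but understated. The fused models average to $\theta_g^{t-1}+\sum_u p_u(1-\rho_u^t)\odot(\theta_u^{t-1}-\theta_g^{t-1})$, a systematic shift of order $\eta_t$ every round. Through the cross term $2\langle\bar\theta_t-\theta^\star,\cdot\rangle$ it enters at first order, so it cannot be absorbed into $\eta_t^2B$.

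A concrete case: two clients with $F_u(\theta)=\frac{1}{2}(\theta-c_u)^2$, $c_1\neq c_2$, $E=1$, exact gradients, and fixed $\rho_1\neq\rho_2$. The global iterate converges to $(\sum_u p_uc_u/\rho_u)/(\sum_u p_u/\rho_u)\neq\theta^\star$. Meanwhile $B_{\mathrm{FedUTR}}$ has no drift term at $E=1$, and the right-hand side of \eqref{eq:FedUTR-final} tends to $0$.

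\textbf{Your fallback.} Restricting to the shared embeddings replaces $F$ by an objective that moves with the personalized CIFM blocks. A bounded gradient perturbation also enters at order $\eta_t$ and leaves an error floor. It cannot be folded into $\Gamma_{\mathrm{CIFM}}$, because $\Gamma$ reaches order $\eta_t^2$ only through an exact cancellation that uses $\sum_u p_uF_u=F$ at the common point $\theta^\star$.

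Two smaller steps also need more than you give them.

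First, non-expansiveness of the $\ell_1$ prox does not keep the FedAvg recursion intact. An average of proximal iterates is not the prox of the averaged step, so the virtual-sequence identity $\bar\theta_{t+1}=\bar\theta_t-\eta_tg_t$ of \cite{fedavg_convergence} is lost. Non-expansiveness plus Jensen gives a recursion for $\sum_up_u\|\theta_u^t-\theta^\star\|^2$ instead, with different constants. For example, $\sum_up_u\sigma_u^2$ replaces $\sum_up_u^2\sigma_u^2$, and heterogeneity is measured through $\nabla F_u(\theta^\star)-\nabla F(\theta^\star)$ rather than $\Gamma$.

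Second, your final step $\mathbb{E}[F(\theta_T)]-F^\star\le\frac{L}{2}\Delta_T$ uses $\nabla F(\theta^\star)=0$. With the $\ell_1$ term present, either the objective is not $L$-smooth, or $\theta^\star$ is a composite minimizer with $\nabla F(\theta^\star)\neq0$. In the latter case the $\ell_1$ Bregman remainder is linear in $\|\theta_T-\theta^\star\|$. So $\Delta_T=O(1/T)$ by itself only yields an $O(1/\sqrt{T})$ objective gap.
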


\begin{proof}
By Lemma~\ref{lem:urm}-\ref{lem:lam}, FedUTR with URM only modifies the initialization of the model parameters and does not affect the optimization procedure. FedUTR with CIFM introduces a convex regularizer and is optimized via proximal local updates. Under the same smoothness and bounded gradient assumptions, we reestablish the convergence bound in FedAvg under CIFM and obtain a bound in which the variance, heterogeneity, and gradient-related constants are modified accordingly. FedUTR with LAM induces a contraction toward the global model at each local update. As a result, local iterates are restricted to a smaller neighborhood around the global parameters, which yields a tighter gradient bound $\tilde{G} \le G_{\mathrm{CIFM}}$. Accordingly, the drift-related term $8(E-1)^2 G_{\mathrm{CIFM}}^2$ is reduced to
$8(E-1)^2 \tilde{G}^2$, leading to the constant $B_{\mathrm{FedUTR}}$ as following
\begin{equation}
\label{eq:B-FedUTR}
B_{\mathrm{FedUTR}}
=
\sum_{u=1}^{N} p_u^2 \sigma_{u,\mathrm{CIFM}}^2
+
6L \Gamma_{\mathrm{CIFM}}
+
8(E-1)^2 \tilde{G}^2 .
\end{equation}
Combining the above three Lemma and substituting $B_{\mathrm{FedUTR}}$
into the convergence bound in \cite{fedavg_convergence} completes the proof.

\end{proof}

\section{Experiments}
\label{section 5}
In this section, we conduct comprehensive experiments to answer the following research questions (RQ) to validate the effectiveness of FedUTR.
\begin{enumerate}[label=\textbf{RQ\arabic*},leftmargin=*,itemsep=0pt, parsep=0pt]
    \item Do FedUTR and its variant FedUTR-SAR outperform state-of-the-art federated baselines?
    \item How do the proposed URM, CIFM, and LAM contribute to the overall effectiveness of FedUTR?
    \item How do the key hyper-parameters influence model performance?
    \item Does the URM enhance the performance of existing methods as a plug-and-play component?
    \item Does the experimental result of FedUTR validate the convergence analysis conclusions?
\end{enumerate}

\subsection{Experiment Settings}
\begin{table}[h]
\centering
\caption{The statistical information of the used datasets.}
\label{tab:datasets}
\begin{tabular}{llllcc}
\toprule
Dataset & Users & Items & Interactions & Avg.I & Sparsity \\
\midrule
KU & 2034 & 5370 & 18519 & 9.11 & 99.83\% \\
Food & 6549 & 1579 & 39740 & 6.61 & 99.62\% \\
Dance & 10715 & 2307 & 83392 & 7.78 & 99.66\% \\
Movie & 16525 & 3509 & 115576 & 6.99 & 99.80\% \\
\bottomrule
\end{tabular}
\end{table}

We conduct comparative experiments with both centralized \cite{VBPR,BM3,MGCN} and federated baselines \cite{FedAVG, FCF, FedNCF, Fedfast, FedAtt, FedRAP, PFedRec, FedMR} to ensure comprehensive and impartial evaluation based on four datasets \cite{NineRec}. Detailed dataset statistics are shown in Table \ref{tab:datasets}.  
We evaluate performance by Hit Rate (HR) and Normalized Discounted Cumulative Gain (NDCG), with higher values denoting superior recommendation effectiveness.

\begin{table*}[t]
\centering
\caption{Performance Comparison of FedUTR and other baselines on four datasets. \textbf{CR} and \textbf{FR} represent centralized and federated recommendation model, respectively. The best results are highlighted in bold. The best baseline results are highlighted with an underline. \textbf{Improvement in FedUTR} indicates the performance improvement of our method relative to the strongest baseline, whereas \textbf{Improvement in FedUTR-SAR} represents the improvement of FedUTR-SAR over FedUTR.}
\label{tab:overall_results}
\begin{tabular}{llcccccccc}
\toprule
{} & \multirow{2}{*}{\textbf{Method}} & \multicolumn{2}{c}{\textbf{KU}} & \multicolumn{2}{c}{\textbf{Food}} & \multicolumn{2}{c}{\textbf{Dance}} & \multicolumn{2}{c}{\textbf{Movie}} \\
\cmidrule(lr){3-4} \cmidrule(lr){5-6} \cmidrule(lr){7-8} \cmidrule(l){9-10}
{} & & HR@10 & NDCG@10 & HR@10 & NDCG@10 & HR@10 & NDCG@10 & HR@10 & NDCG@10 \\
% {} & & H@10 & N@10 & H@10 & N@10 & H@10 & N@10 & H@10 & N@10 \\
\midrule
\multirow{3}{*}{\textbf{CR}} & 
\textbf{VBPR} & 0.2655 & 0.1555 & 0.0770 & 0.0375 & 0.0783 & 0.0394 & 0.0530 & 0.0273 \\
\multirow{3}{*}{} & 
\textbf{BM3} & 0.2478 & 0.1449 & 0.0843 & 0.0410 & 0.0837 & 0.0407 & 0.0603 & 0.0312 \\
\multirow{3}{*}{} & 
\textbf{MGCN} & 0.2581 & 0.1497 & 0.0893 & 0.0432 & 0.0819 & 0.0405 & 0.0602 & 0.0314 \\
\midrule
\multirow{8}{*}{\textbf{FR}} &
\textbf{FCF} & 0.1593 & 0.0648 & 0.0993 & 0.0437 & 0.0986 & 0.0421 & 0.1118 & 0.0515 \\
\multirow{8}{*}{} &
\textbf{FedAvg} & 0.1180 & 0.0515 & 0.1215 & 0.0563 & 0.1266 & 0.0616 & 0.1318 & 0.0600 \\
\multirow{8}{*}{} &
\textbf{FedNCF} & 0.1028 & 0.0430 & 0.1312 & 0.0576 & 0.1298 & 0.0594 & 0.1239 & 0.0543 \\
\multirow{8}{*}{} &
\textbf{Fedfast} & 0.0772 & 0.0349 & 0.0991 & 0.0435 & 0.1041 & 0.0445 & 0.1139 & 0.0522 \\
\multirow{8}{*}{} &
\textbf{FedAtt} & 0.1303 & 0.0655 & 0.1402 & 0.0663 & \underline{0.2589} & \underline{0.1317} & 0.1561 & 0.0742 \\
\multirow{8}{*}{} &
\textbf{FedRAP} & 0.1003 & 0.0453 & 0.1072 & 0.0500 & 0.1480 & 0.0702 & 0.1259 & 0.0581 \\
\multirow{8}{*}{} &
\textbf{PFedRec} & \underline{0.3564} & \underline{0.2710} & \underline{0.2117} & \underline{0.1002} & 0.2574 & 0.1238 & \underline{0.2246} & \underline{0.1146} \\
\multirow{8}{*}{} &
\textbf{FedMR} & 0.1028 & 0.0365 & 0.0151 & 0.0067 & 0.0099 & 0.0047 & 0.0096 & 0.0052 \\
\midrule
% \multirow{4}{*}{\textbf{Ours}} &
% \textbf{FedUTR} & \textbf{0.5693} & \textbf{0.3994} & \textbf{0.2622} & \textbf{0.1296} & \textbf{0.3477} & \textbf{0.1829} & \textbf{0.2551} & \textbf{0.1303} \\
\multirow{4}{*}{\textbf{Ours}} &
\textbf{FedUTR} & 0.5693 & 0.3994 & \textbf{0.2622} & 0.1296 & 0.3477 & 0.1829 & 0.2551 & 0.1303 \\
\cmidrule(lr){2-10}
& \textbf{Improvement} & 
59.74\%$\uparrow$ & 47.38\%$\uparrow$ & 23.85\%$\uparrow$ & 29.34\%$\uparrow$ & 34.30\%$\uparrow$ & 38.88\%$\uparrow$ & 13.58\%$\uparrow$ & 13.70\%$\uparrow$ \\
\cmidrule(lr){2-10}
&
\textbf{FedUTR-SAR} & \textbf{0.5777} & \textbf{0.4052} & 0.2590 & \textbf{0.1303} & \textbf{0.3624} & \textbf{0.1924} & \textbf{0.2605} & \textbf{0.1335} \\

& \textbf{Improvement} & 
1.48\%$\uparrow$ & 1.45\%$\uparrow$ & -1.22\%$\downarrow$ & 0.54\%$\uparrow$ & 4.23\%$\uparrow$ & 5.19\%$\uparrow$ & 2.12\%$\uparrow$ & 2.46\%$\uparrow$ \\
\bottomrule
\end{tabular}
\end{table*}

\begin{table*}[t]
\centering
\caption{Ablation study results. FedUTR ablation variants are denoted as: \textbf{FedUTR w/o URM} (Universal Representation Module removed), \textbf{FedUTR w/o CIFM} (Collaborative Information Fusion Module ablated), \textbf{FedUTR w/o LAM} (Local Adaptation Module excluded), and \textbf{FedUTR w/o Regular} (L1 regularization omitted from the objective). }
\label{tab:ablation_results}
\begin{tabular}{lcccccccc}
\toprule
\multirow{2}{*}{\textbf{Method}} & \multicolumn{2}{c}{\textbf{KU}} & \multicolumn{2}{c}{\textbf{Food}} & \multicolumn{2}{c}{\textbf{Dance}} & \multicolumn{2}{c}{\textbf{Movie}} \\
\cmidrule(lr){2-3} \cmidrule(lr){4-5} \cmidrule(lr){6-7} \cmidrule(l){8-9}
\multirow{2}{*}{} & HR@10 & NDCG@10 & HR@10 & NDCG@10 & HR@10 & NDCG@10 & HR@10 & NDCG@10 \\
% \multirow{2}{*}{}  & H@10 & N@10 & H@10 & N@10 & H@10 & N@10 & H@10 & N@10 \\
\midrule
\textbf{FedUTR w/o URM} & 0.3535 & 0.2739 & 0.1962 & 0.0946 & 0.2644 & 0.1321 & 0.2263 & 0.1168 \\
\textbf{FedUTR w/o CIFM} & 0.5501 & 0.3803 & 0.2478 & 0.1201 & 0.3281 & 0.1695 & 0.2336 & 0.1171 \\
\textbf{FedUTR w/o LAM} & 0.5383 & 0.3802 & 0.2567 & 0.1294 & 0.3417 & 0.1810 & 0.2533 & 0.1282 \\
\textbf{FedUTR w/o Regular} & 0.5688 & 0.3993 & 0.2512 & 0.1239 & 0.3342 & 0.1767 & 0.2422 & 0.1234 \\
\textbf{FedUTR} & \textbf{0.5693} & \textbf{0.3994} & \textbf{0.2622} & \textbf{0.1296} & \textbf{0.3477} & \textbf{0.1829} & \textbf{0.2551} & \textbf{0.1303} \\
\bottomrule
\end{tabular}

\end{table*}

\subsection{Overall Performance (RQ1)}

We compare the performance of baselines and FedUTR on four datasets. The experimental results are presented in Table \ref{tab:overall_results}, from which we have the following observations: (1) FedUTR outperforms the three centralized multimodal recommendation methods across all datasets. In contrast to centralized approaches, which share a single set of parameters for all users, FedUTR integrates personalized modules on the client side, leading to more accurate and personalized recommendations. 2) Our method consistently achieves state-of-the-art performance among all federated recommendation baselines. In our experiments, all four selected datasets exhibit high sparsity, and traditional FR approaches that rely solely on interaction data for item representation consequently demonstrate inferior performance. In contrast, FedUTR additionally incorporates universal representations to capture intrinsic item features, rather than depending entirely on historical interactions to characterize items, thereby enhancing its representational capacity and improving recommendation accuracy. (3) Compared to FedMR, a multimodal federated recommendation method, our approach achieves better performance while significantly reducing the model’s parameter size. (4) FedUTR-SAR achieves overall better performance than FedUTR in most cases and consistently outperforms all baseline methods. However, compared with FedUTR, the introduction of the sparsity-aware module incurs additional computational cost during training. Although FedUTR-SAR demonstrates superior performance, the improvement over FedUTR is not substantial and is not consistently observed across all scenarios. Therefore, FedUTR-SAR is more suitable for environments where client-side computational resources are relatively sufficient and high performance is required. In contrast, FedUTR can achieve a better trade-off between performance and computational complexity under resource-constrained conditions.

\subsection{Ablation Study (RQ2)}
The ablation study consists of two parts. In the first part, we conduct ablations on the main modules of FedUTR to evaluate the contribution of each component to the overall performance. In the second part, we further analyze the capability of the URM and CIFM modules in capturing universal and personalized information, respectively.

\begin{table*}[t]
\centering
\caption{Plug-and-Play compatibility verification results. \textbf{w/ URM} denotes integrating URM into backbone. \textbf{Improvement} denotes the performance increase achieved by integrating URM compared to the original method without URM integration.}
\label{tab:Plug_results}
% \normalsize
% \setlength{\tabcolsep}{5pt} % 压缩列间距（默认6pt）
\begin{tabular}{lccccccccc}
\toprule
\multirow{2}{*}{\textbf{Method}}& \multicolumn{2}{c}{\textbf{KU}} & \multicolumn{2}{c}{\textbf{Food}} & \multicolumn{2}{c}{\textbf{Dance}} & \multicolumn{2}{c}{\textbf{Movie}} \\
\cmidrule(lr){2-3} \cmidrule(lr){4-5} \cmidrule(lr){6-7} \cmidrule(l){8-9}
& HR@10 & NDCG@10 & HR@10 & NDCG@10 & HR@10 & NDCG@10 & HR@10 & NDCG@10 \\
\midrule
\textbf{FCF} & 0.1593 & 0.0648 & 0.0993 & 0.0437 & 0.0986 & 0.0421 & 0.1118 & 0.0515 \\
\textbf{FCF w/ UR} & \textbf{0.2984} & \textbf{0.1502} & \textbf{0.1545} & \textbf{0.0726} & \textbf{0.1532} & \textbf{0.0735} & \textbf{0.1317} & \textbf{0.0605} \\
\textbf{Improvement} & 87.32\% $\uparrow$ & 131.79\%$\uparrow$  & 55.59\%$\uparrow$ & 66.13\%$\uparrow$ & 55.38\%$\uparrow$ &74.58\%$\uparrow$ & 17.80\%$\uparrow$ & 17.48\%$\uparrow$ \\
\midrule
\textbf{FedNCF} & 0.1028 & 0.0430 & 0.1312 & 0.0576 & 0.1298 & 0.0594 & 0.1239 & 0.0543\\
\textbf{FedNCF w/ UR} & \textbf{0.4582} & \textbf{0.2870} & \textbf{0.1820} & \textbf{0.0870} & \textbf{0.2703} & \textbf{0.1356} & \textbf{0.1737} & \textbf{0.0836} \\
\textbf{Improvement} & 345.72\%$\uparrow$ & 567.44\%$\uparrow$ & 38.72\%$\uparrow$ & 51.04\%$\uparrow$ & 108.24\%$\uparrow$ & 128.28\%$\uparrow$ & 40.19\%$\uparrow$ & 53.96\%$\uparrow$ \\
\midrule
\textbf{FedRAP} & 0.1003 & 0.0453 & 0.1072 & 0.0500 & 0.1480 & 0.0702 & 0.1259 & 0.0581 \\
\textbf{FedRAP w/ UR} & \textbf{0.3732} & \textbf{0.2838} & \textbf{0.165}1 & \textbf{0.0799} & \textbf{0.1941} & \textbf{0.0924} & \textbf{0.1555} & \textbf{0.0724} \\
\textbf{Improvement} & 272.08\%$\uparrow$ & 526.49\%$\uparrow$ & 54.01\%$\uparrow$ & 59.80\%$\uparrow$ & 31.15\%$\uparrow$ & 31.62\%$\uparrow$ & 23.51\%$\uparrow$ & 24.61\%$\uparrow$ \\
\bottomrule
\end{tabular}

\end{table*}

\begin{figure}[t]
    \centering
    % 第二行
    \subfloat[\footnotesize Universal Representation]{
            \includegraphics[width=0.23\textwidth]{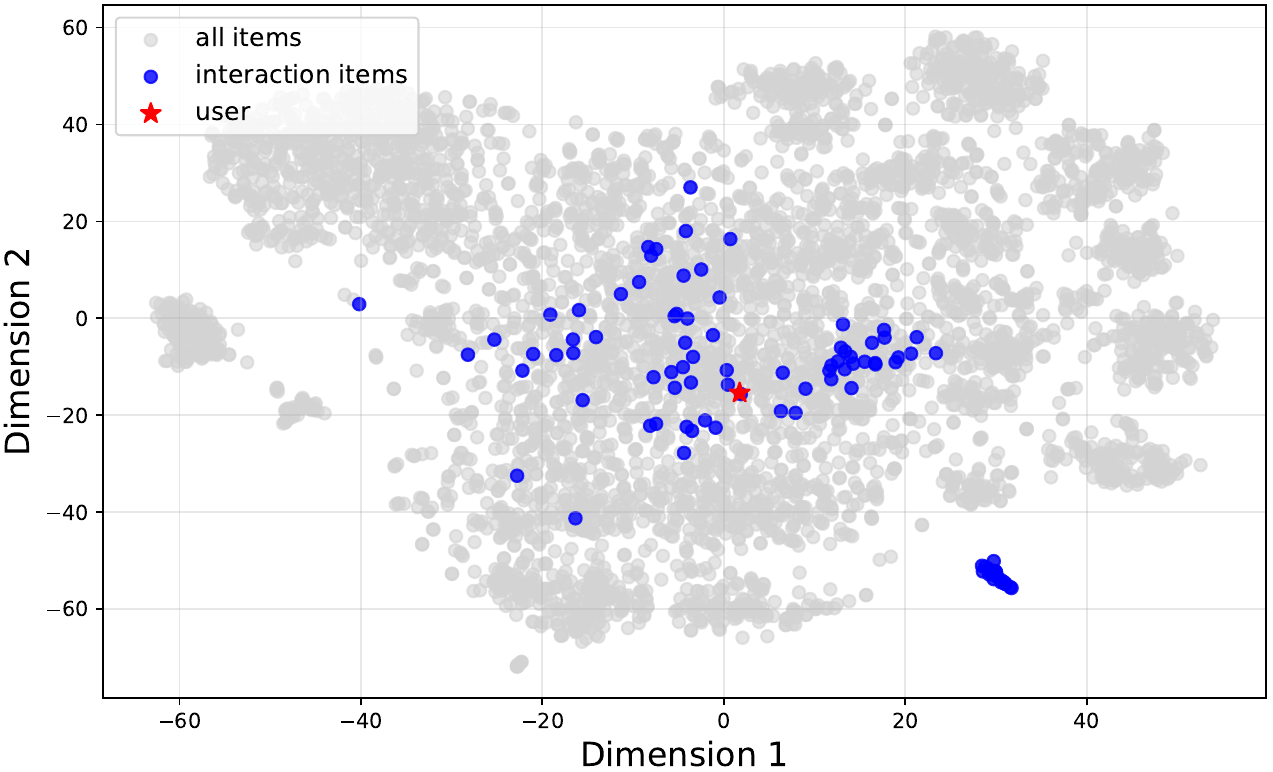}
    }
    \subfloat[\footnotesize Representation of CIFM]{
        \includegraphics[width=0.23\textwidth]{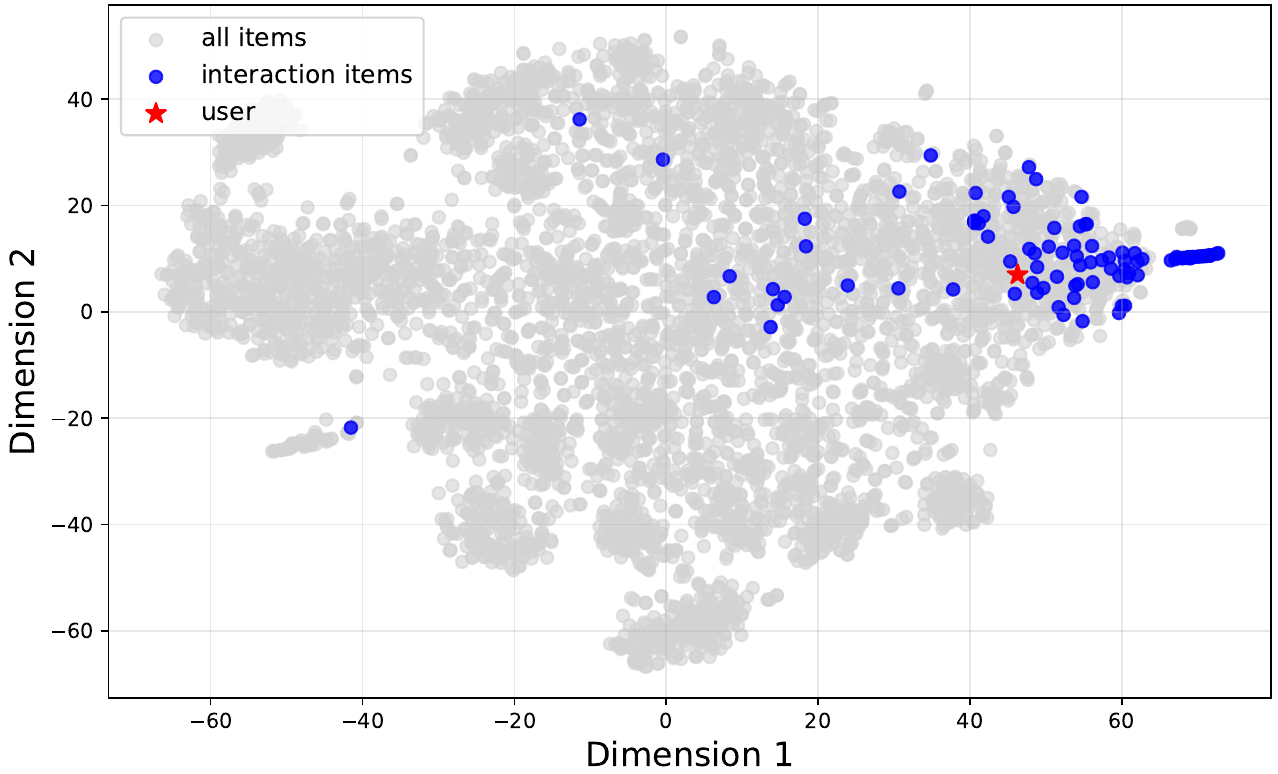}
    }
    \\
    \subfloat[\footnotesize CS with interactions]{
        \includegraphics[width=0.23\textwidth]{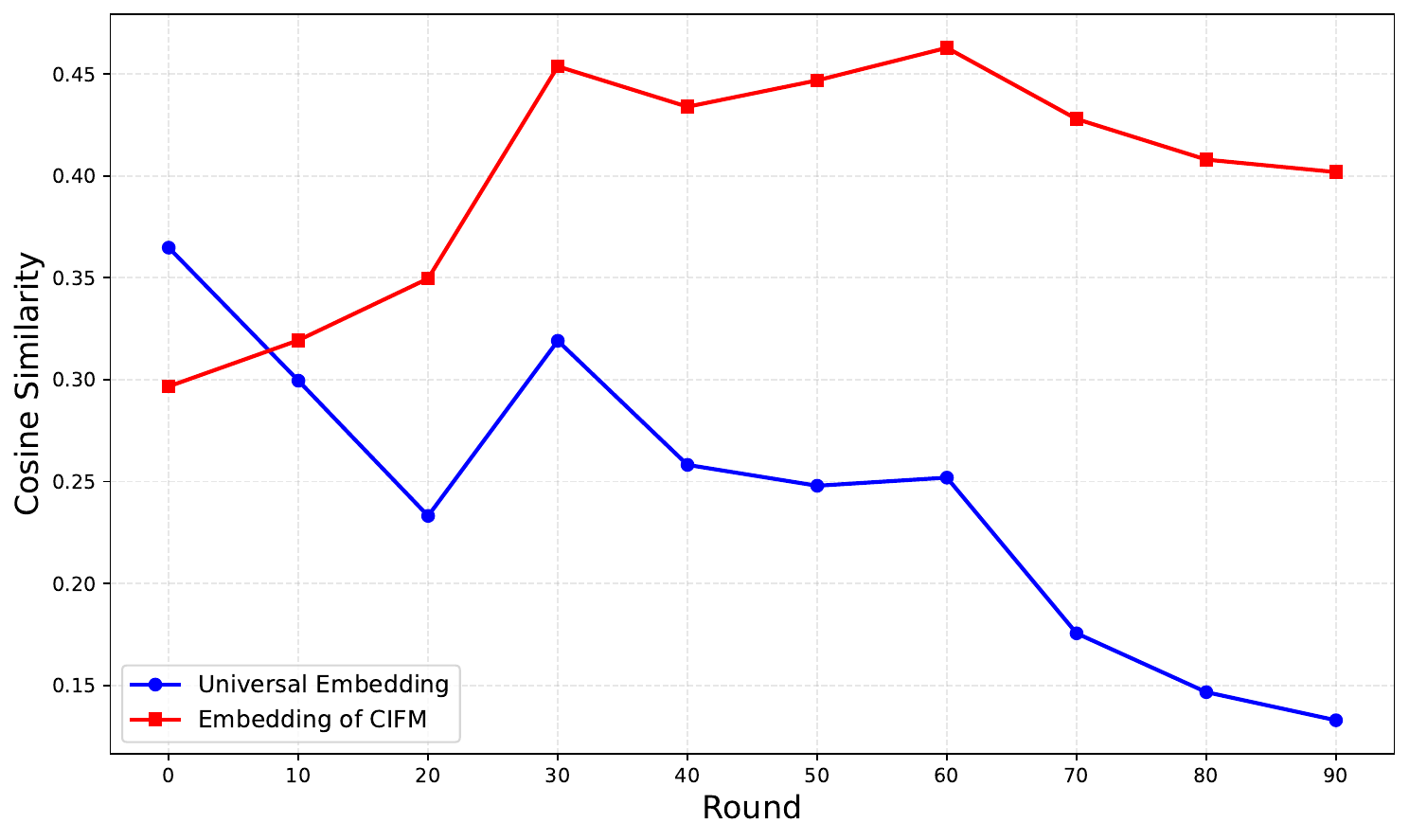}
    }
    \subfloat[\footnotesize CS with non-interactions]{
        \includegraphics[width=0.23\textwidth]{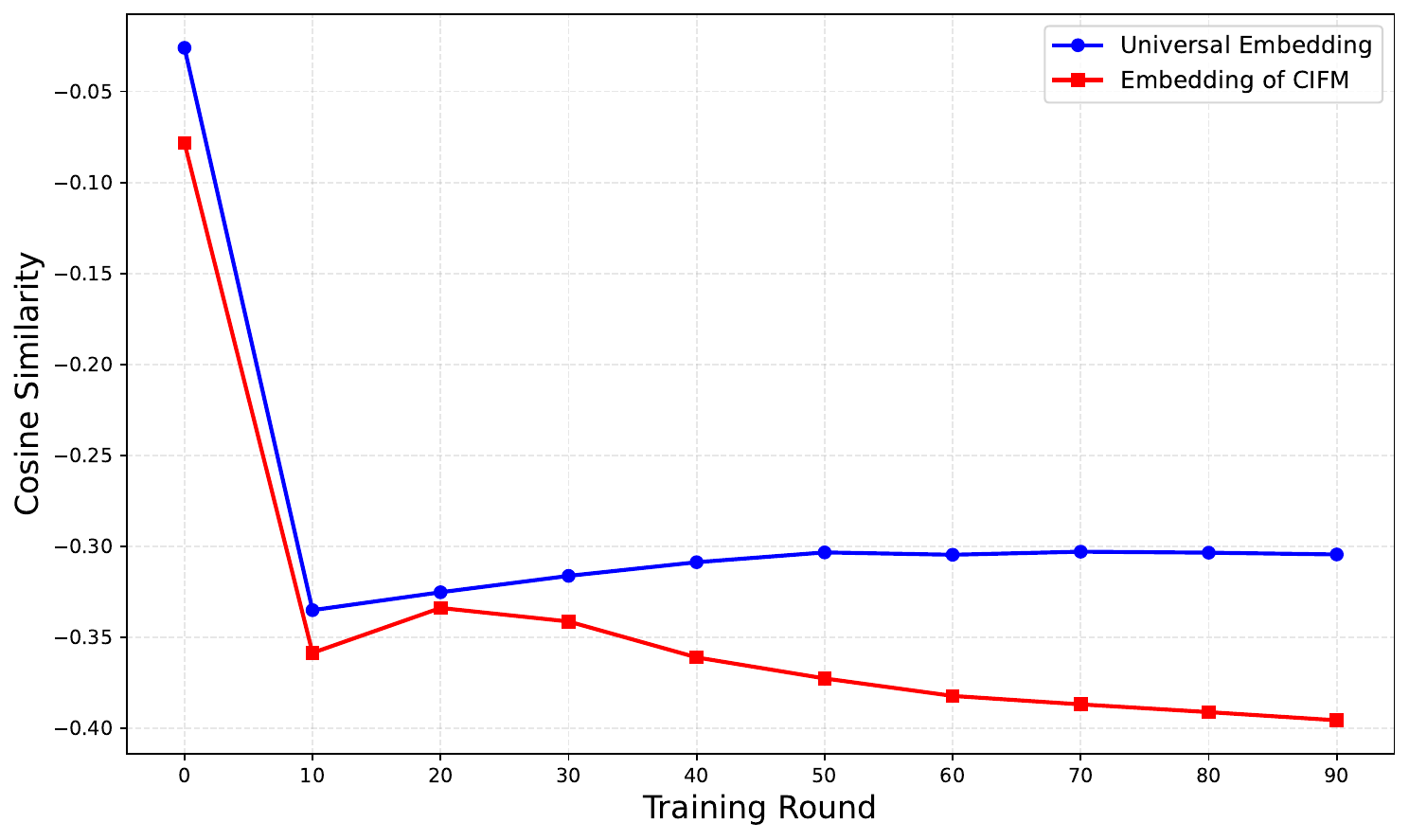}
    }
\caption{
Visualization of universal representations and CIFM-fused item representations for a user in the KU dataset. The first row shows item embeddings, while the second row illustrates the cosine similarity between the user and interacted/non-interacted items over training rounds.
}
\label{fig:visualization}
\end{figure}

First, we conduct an ablation study to investigate the impact of different modules on overall performance. Based on the experimental results reported in Table \ref{tab:ablation_results}, we observe that (1) the URM exerts the most significant impact on FedUTR's performance, with its exclusion leading to the most pronounced performance drop. In FedUTR, universal representations are introduced to complement the limitations of traditional methods that rely entirely on interaction data, particularly in sparse scenarios. When this module is removed, the model reverts to a conventional ID-based embedding approach, resulting in a significant performance degradation. (2) The removal of the CIFM and LAM also leads to varying degrees of performance degradation in the model. The CIFM is designed to capture local interaction knowledge and collaborative information. Our preliminary experiments have revealed that collaborative information provides better performance only in low data sparsity. Hence, the CIFM has a more significant effect on clients with denser interactions, yielding improvements that are relatively smaller compared to the URM. The LAM is designed to preserve user-specific preferences on the client. Although the LAM, which operates on the CIFM to preserve personalized information on the client side, contributes to performance improvement, its effectiveness is constrained by the inherent capabilities of the CIFM. (3) The regularization term exhibits a relatively minor impact (almost negligible) on the KU dataset, while demonstrating more pronounced effects in the other three datasets. This discrepancy primarily stems from differences in dataset characteristics, with a detailed analysis provided in section \ref{hyper-parameter-analysis}.

Second, to provide a more intuitive demonstration of the distinct roles of universal representations (URM) and collaborative information (CIFM) in client-side model training, we randomly selected a user for experimental analysis, as shown in Fig. \ref{fig:visualization}. 
The Fig. \ref{fig:visualization} (a) and (b) depict the item universal representations and the item representations fused with local interaction information via the CIFM layer, respectively. 
From the results, we observe that in the distribution of universal representations, items that interacted with the user are relatively scattered and lack strong personalization. In contrast, after incorporating local personalized behavior information, interacted item embeddings exhibit a clear tendency to cluster around the corresponding user embedding, highlighting the effect of CIFM.

To quantitatively illustrate this difference, we compute the cosine similarity (CS) between users and items. The Fig. \ref{fig:visualization} (c) and (d) show the similarity between users and interacted/non-interacted items. The horizontal axis represents training rounds, and the vertical axis represents cosine similarity. 
The experimental results reveal two key observations: (1) For items that users have interacted with, the similarity between users and items in the fused representation space is significantly higher than in the universal representation space; (2) For items that users have not interacted with, the similarity in the fused space is lower than in the universal representation space. These findings further confirm that the universal representations (UR) capture generic knowledge across clients, while the CIFM effectively complements personalized behavior information for each client.

% \subsection{Personalization about CIFM}

% \begin{figure*}[htbp]
%     \centering
%     % 第一行
%     \subfloat[user 354]{
%         \includegraphics[width=0.24\textwidth]{./image/354avg_interacted_euclidean_comparison.pdf}
%     }
%     \subfloat[user 611]{
%         \includegraphics[width=0.24\textwidth]{./image/611avg_interacted_euclidean_comparison.pdf}
%     }
%     \subfloat[user 2025]{
%         \includegraphics[width=0.24\textwidth]{./image/2025avg_interacted_euclidean_comparison.pdf}
%     }
%     \subfloat[user 2027]{
%         \includegraphics[width=0.24\textwidth]{./image/2027avg_interacted_euclidean_comparison.pdf}
%     }

%     \\ % 换行

%     % 第二行
%     \subfloat[user 354]{
%             \includegraphics[width=0.24\textwidth]{./image/354avg_non_interacted_euclidean_comparison.pdf}
%     }
%     \subfloat[user 611]{
%         \includegraphics[width=0.24\textwidth]{./image/611avg_non_interacted_euclidean_comparison.pdf}
%     }
%     \subfloat[user 2025]{
%         \includegraphics[width=0.24\textwidth]{./image/2025avg_non_interacted_euclidean_comparison.pdf}
%     }
%     \subfloat[user 2027]{
%         \includegraphics[width=0.24\textwidth]{./image/2027avg_non_interacted_euclidean_comparison.pdf}
%     }

%     \caption{The evolution of cold-start item embeddings during training. Early in training, these embeddings deviate from the warm item distribution, but progressively shift toward and eventually align with the overall warm item embedding distribution.}
%     \label{fig:15-subfigs}
% \end{figure*}

\subsection{Hyper-parameter Analysis (RQ3)}
\label{hyper-parameter-analysis}
We conduct several experiments to examine the effects of two critical hyper-parameters on model performance. 

\subsubsection{Embedding size} As shown in Fig. \ref{fig:embedding_size}, the model performance continues to improve with the increasing embedding size until the rate of improvement diminishes at size 32. Increasing the embedding dimensionality from 32 to 64 results in a proportional growth in parameters, but it yields only marginal performance improvements. Our empirical analysis demonstrates that FedUTR achieves the optimal trade-off between performance and efficiency when employing 32-dimensional embeddings, as further scaling the embedding size beyond this point yields diminishing returns relative to computational overhead.
\begin{figure}[t]
\centering
\includegraphics[width=\columnwidth]{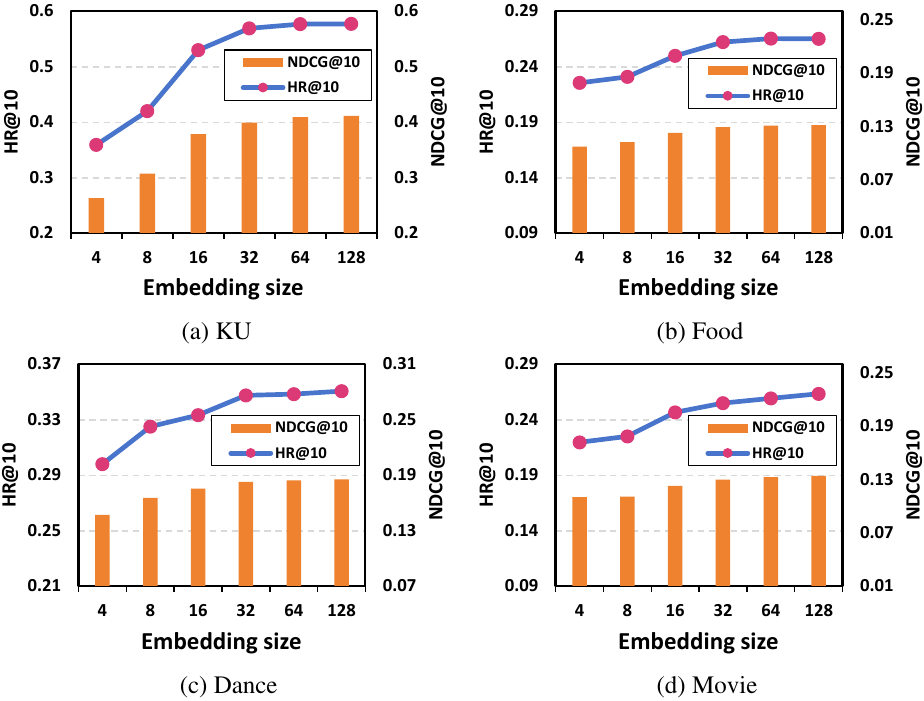} % 子图路径
\caption{Effect of embedding size.}
\label{fig:embedding_size}
\end{figure}

\begin{figure}[t]
\centering
\includegraphics[width=\columnwidth]{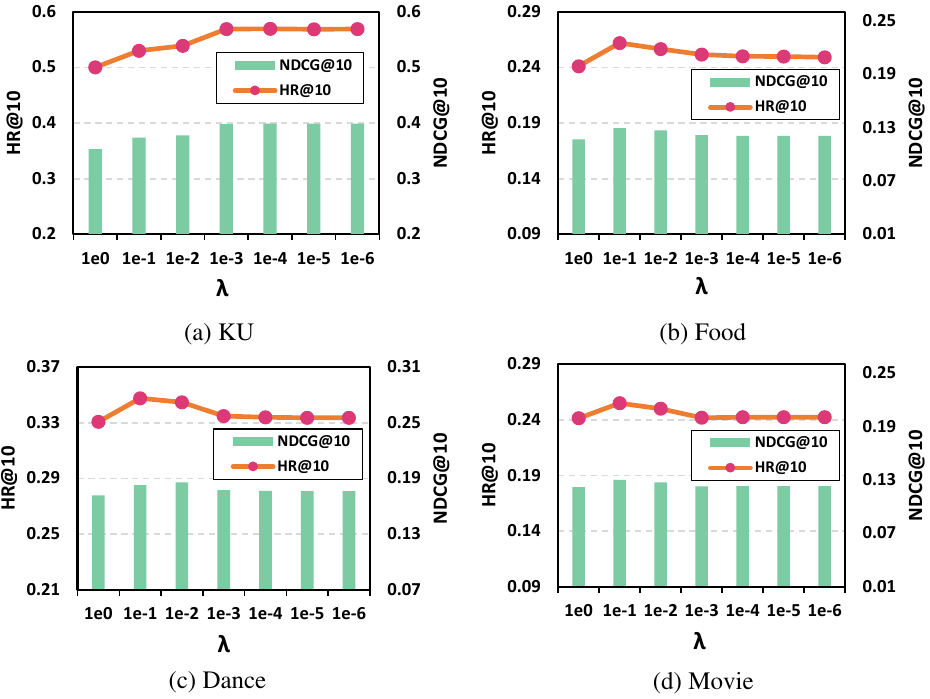} % 子图路径
\caption{Effect of regularization coefficient $\lambda.$}
\label{fig:lambda}
\end{figure}

% \begin{figure*}[t]
%     \centering

%     % 第一行
%     \subfloat[KU HR@10]{
%             \includegraphics[width=0.24\textwidth]{./image/Convergence-KU-HR@10.pdf}
%     }
%     \subfloat[Food HR@10]{
%         \includegraphics[width=0.24\textwidth]{./image/Convergence-Food-HR@10.pdf}
%     }
%     \subfloat[Dance HR@10]{
%         \includegraphics[width=0.24\textwidth]{./image/Convergence-Dance-HR@10.pdf}
%     }
%     \subfloat[Movie HR@10]{
%         \includegraphics[width=0.24\textwidth]{./image/Convergence-Movie-HR@10.pdf}
%     }
    
%     \\ % 换行
    
%     % 第一行
%     \subfloat[KU NDCG@10]{
%         \includegraphics[width=0.24\textwidth]{./image/Convergence-KU-NDCG@10.pdf}
%     }
%     \subfloat[Food NDCG@10]{
%         \includegraphics[width=0.24\textwidth]{./image/Convergence-Food-NDCG@10.pdf}
%     }
%     \subfloat[Dance NDCG@10]{
%         \includegraphics[width=0.24\textwidth]{./image/Convergence-Dance-NDCG@10.pdf}
%     }
%     \subfloat[Movie NDCG@10]{
%         \includegraphics[width=0.24\textwidth]{./image/Convergence-Movie-NDCG@10.pdf}
%     }
% \caption{
% Convergence behavior of FedUTR and baselines on four datasets during the training process. The top row shows the convergence curves of HR@10, and the bottom row reports NDCG@10.
% }
% \label{fig:convergence}
% \end{figure*}

\subsubsection{Regularization strength $\lambda$} Fig. \ref{fig:lambda} demonstrates the impact of different $\lambda$ on model performance. The results reveal that (1) across the Food, Dance, and Movie datasets, model performance initially improves, then degrades as $\lambda$ decreases, and the highest scores are achieved at $\lambda$ = 0.1. (2) On the KU dataset, model performance monotonically improves with decreasing $\lambda$ values until reaching a saturation point at $\lambda$ = 0.001. We analyze performance discrepancies by examining intrinsic characteristics of datasets. The regularization term is introduced to prevent model overfitting. The KU dataset exhibits significantly higher average user interactions (Avg.I=9.11) compared to the other three datasets (Avg.I $\approx$7), thereby inherently reducing overfitting risks due to its richer interaction density. The remaining three datasets, with statistically homogeneous lower average interaction counts, demonstrate consistent performance trends across $\lambda$ values. (3) All datasets exhibit suboptimal performance when $\lambda$ = 1. This phenomenon arises from the optimization dynamics: the recommendation loss dominates the gradient updates, while the L1 regularization term primarily serves as an auxiliary mechanism to prevent overfitting. When $\lambda \geq 1$, the L1 term's influence surpasses that of the recommendation loss, significantly skewing the gradient update trajectory.

\subsection{Plug-and-Play Compatibility Verification (RQ4)}

The URM can be used as a plug-and-play component that provides universal item representations, which can be seamlessly integrated into existing FR models. We conducted experiments to validate the compatibility and effectiveness of URM when integrated with existing FR models. Specifically, we select three representative models (FCF, FedNCF, and FedRAP) as backbones, comparing their original results with URM-enhanced variants. As shown in Table \ref{tab:Plug_results}, the empirical evidence demonstrates significant performance gains across all backbones when URM is incorporated. Notably, even when all backbones have integrated URM, FedUTR still maintains significant performance superiority.

% \subsection{Convergence Validation}
% \begin{figure*}[t]
%     \centering

%     % 第一行
%     \subfloat[KU HR@10]{
%             \includegraphics[width=0.24\textwidth]{./image/Convergence-KU-HR@10.pdf}
%     }
%     \subfloat[Food HR@10]{
%         \includegraphics[width=0.24\textwidth]{./image/Convergence-Food-HR@10.pdf}
%     }
%     \subfloat[Dance HR@10]{
%         \includegraphics[width=0.24\textwidth]{./image/Convergence-Dance-HR@10.pdf}
%     }
%     \subfloat[Movie HR@10]{
%         \includegraphics[width=0.24\textwidth]{./image/Convergence-Movie-HR@10.pdf}
%     }
    
%     \\ % 换行
    
%     % 第一行
%     \subfloat[KU NDCG@10]{
%         \includegraphics[width=0.24\textwidth]{./image/Convergence-KU-NDCG@10.pdf}
%     }
%     \subfloat[Food NDCG@10]{
%         \includegraphics[width=0.24\textwidth]{./image/Convergence-Food-NDCG@10.pdf}
%     }
%     \subfloat[Dance NDCG@10]{
%         \includegraphics[width=0.24\textwidth]{./image/Convergence-Dance-NDCG@10.pdf}
%     }
%     \subfloat[Movie NDCG@10]{
%         \includegraphics[width=0.24\textwidth]{./image/Convergence-Movie-NDCG@10.pdf}
%     }
% \caption{
% Convergence behavior of FedUTR and baselines on four datasets during the training process. The top row shows the convergence curves of HR@10, and the bottom row reports NDCG@10.
% }
% \label{fig:convergence}
% \end{figure*}

\subsection{Convergence Validation (RQ5)}
\begin{figure}[t]
    \centering
    \subfloat[\footnotesize HR@10]{
        \includegraphics[width=0.24\textwidth]{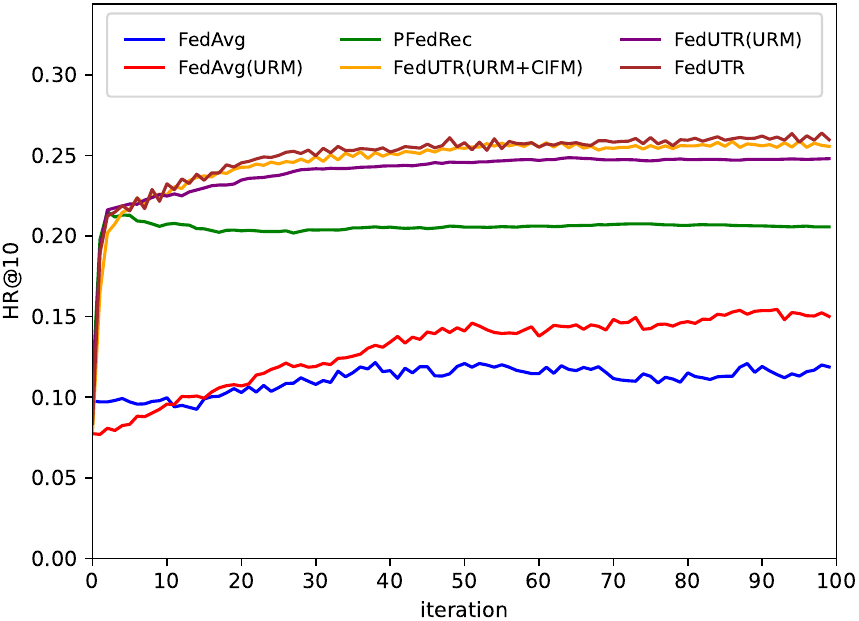}
    }
    \subfloat[\footnotesize NDCG@10]{
        \includegraphics[width=0.24\textwidth]{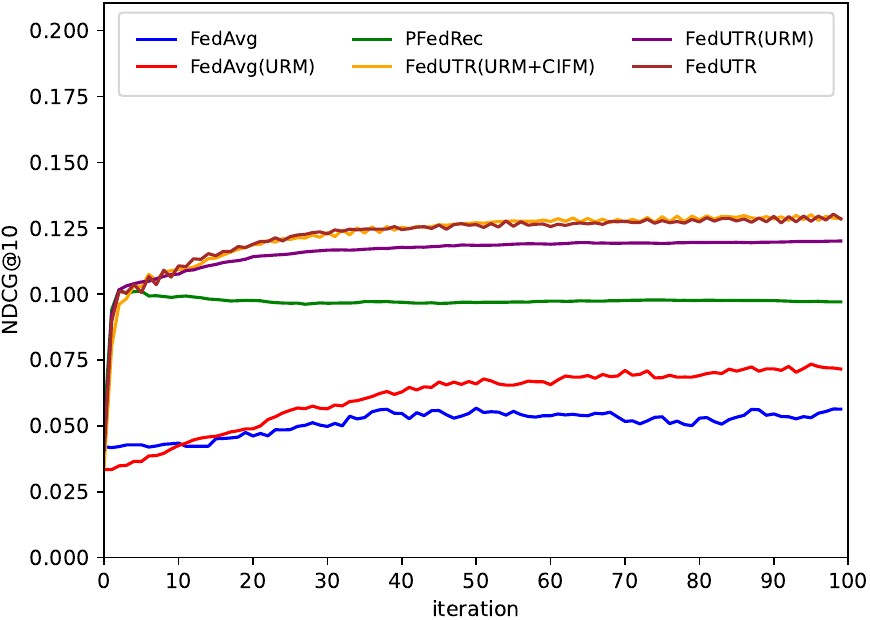} 
    }
\caption{
Convergence behavior of FedUTR and baselines on the Food dataset during the training process. 
}
\label{fig:convergence}
\end{figure}

To validate the convergence of FedUTR, we visualize its recommendation performance on the Food dataset in terms of HR@10 and NDCG@10 during the training process. We select FedAvg and PFedRec as two representative baselines, where FedAvg represents the earliest federated optimization paradigm, and PFedRec serves as a strong state-of-the-art baseline in our experiments.
As shown in Fig. \ref{fig:convergence}, FedUTR consistently converges on the Food dataset, exhibiting stable and monotonic improvements in both HR and NDCG. Compared to FedAvg and PFedRec, FedUTR converges to a better performance level while maintaining a comparable convergence speed. Specifically, URM contributes most significantly to improving convergence efficiency, whereas CIFM and LAM have relatively smaller effects on convergence behavior, which is consistent with the results of the ablation study. 
% Notably, although CIFM and LAM have a limited impact on convergence speed, they play a crucial role in improving final performance.

\subsection{Privacy Protection}
In this section, we conduct robustness evaluations on the privacy protection enhanced FedUTR with Local Differential Privacy (LDP) strategy using the KU and Food datasets. Specifically, we achieve LDP by injecting zero-mean Laplace noise into client-side item embeddings to preserve user privacy.
We consider the noise intensity $\delta = [0.1, 0.2, 0.3, 0.4, 0.5]$. The experimental results in Table \ref{tab:privacy_LDP} demonstrate FedUTR's robustness against noise perturbations of different scales. Despite performance degradation under varying noise intensities, FedUTR maintains remarkable stability, retaining 98.84\% to 99.91\% of its original performance.

\begin{table}[htbp]
\centering
\caption{Experimental results of privacy protection enhanced FedUTR with various noise intensities.}
\label{tab:privacy_LDP}
\setlength{\tabcolsep}{2pt}
\begin{tabular}{lccccccc}
\toprule
{} & \textbf{Intensity}$ \, \delta$  & 0 & 0.1 & 0.2 & 0.3 & 0.4 & 0.5\\
\midrule
\multirow{2}{*}{\textbf{KU}} & HR@10 & \textbf{0.5693} & 0.5688 & 0.5678 & 0.5683 & 0.5688 & 0.5688 \\
\multirow{2}{*}{} & NDCG@10 & \textbf{0.3994} & 0.3984 & 0.3987 & 0.3976 & 0.3980 & 0.3979  \\
\midrule
\multirow{2}{*}{\textbf{Food}} & HR@10 & \textbf{0.2622} & 0.2619 & 0.2594 & 0.2596 & 0.2600 & 0.2597 \\
\multirow{2}{*}{} & NDCG@10 & \textbf{0.1296} & 0.1283 & 0.1284 & 0.1281 & 0.1281 & 0.1282 \\
\bottomrule
\end{tabular}
\end{table}

\section{Conclusion}
\label{section 6}
In this paper, we have revealed the limitations of existing FRs that rely entirely on item ID embeddings under highly sparse scenarios. To address the challenge, we have proposed FedUTR, a novel method that introduces universal representations to complement the shortcomings of ID embeddings. Compared to existing methods, FedUTR not only incorporates modality information but also achieves parameter efficiency, significantly enhancing its potential for practical applications. Meanwhile, our convergence analysis has provided rigorous theoretical guarantees for the effectiveness of the proposed method.
We also introduce a variant, FedUTR-SAR, which incorporates a sparsity-aware residual module to adaptively balance universal and personalized information, providing additional performance improvements. 
Extensive experiments have demonstrated that FedUTR achieves superior performance compared to state-of-the-art baselines. Furthermore, in-depth experiments confirm the compatibility of the URM with existing FR models and its robustness to privacy-preserving aggregation techniques, demonstrating substantial performance improvements in highly sparse scenarios, where traditional methods struggle to perform effectively.

\bibliographystyle{IEEEtran}
\bibliography{paper}

% \newpage

% \section{Biography Section}
% If you have an EPS/PDF photo (graphicx package needed), extra braces are
%  needed around the contents of the optional argument to biography to prevent
%  the LaTeX parser from getting confused when it sees the complicated
%  $\backslash${\tt{includegraphics}} command within an optional argument. (You can create
%  your own custom macro containing the $\backslash${\tt{includegraphics}} command to make things
%  simpler here.)
 
% \vspace{11pt}

% \bf{If you include a photo:}\vspace{-33pt}
% \begin{IEEEbiography}[{\includegraphics[width=1in,height=1.25in,clip,keepaspectratio]{fig1}}]{Michael Shell}
% Use $\backslash${\tt{begin\{IEEEbiography\}}} and then for the 1st argument use $\backslash${\tt{includegraphics}} to declare and link the author photo.
% Use the author name as the 3rd argument followed by the biography text.
% \end{IEEEbiography}

% \vspace{11pt}

% \bf{If you will not include a photo:}\vspace{-33pt}
% \begin{IEEEbiographynophoto}{John Doe}
% Use $\backslash${\tt{begin\{IEEEbiographynophoto\}}} and the author name as the argument followed by the biography text.
% \end{IEEEbiographynophoto}

% \vfill

\end{document}